\documentclass[a4paper,UKenglish,cleveref, autoref, thm-restate]{lipics-v2021}

\usepackage{graphicx} % Required for inserting images
\usepackage{amssymb}
\usepackage{amsthm}
\usepackage{mathtools}
\usepackage{algorithmic}
\usepackage{booktabs}
\usepackage[table]{xcolor}

\usepackage{cite}

\usepackage{apxproof}
\newtheoremrep{thm}[theorem]{Theorem}
\newtheoremrep{prop}[proposition]{Proposition}

\usepackage{todonotes}

\def\david#1{{\color{magenta}[DW] #1}}

\begin{CCSXML}
<ccs2012>
<concept>
<concept_id>10002951.10002952</concept_id>
<concept_desc>Information systems~Data management systems</concept_desc>
<concept_significance>500</concept_significance>
</concept>
</ccs2012>
\end{CCSXML}

\ccsdesc[500]{Information systems~Data management systems}

\keywords{Database repairs, metric spaces, coincidence constraints, inclusion constraints, foreign-key constraints} %TODO mandatory; please add comma-separated list of keywords

\Copyright{Youri Kaminsky, Benny Kimelfeld, Ester Livshits, Felix Naumann, and David Wajc}

\title{Repairing Databases over Metric Spaces
 with Coincidence Constraints}
%\subtitle{ICDT 2025, deadline September 12/19, 15 pages}
 \titlerunning{Repairing Databases over Metric Spaces
with Coincidence Constraints}
\authorrunning{Y. Kaminsky, B. Kimelfeld, E. Livshits, F. Naumann, and D. Wajc}

%\author{Anonymous authors due to double-blind reviewing}{\ }{}{}
\author{Youri Kaminsky}{Hasso Plattner Institute, University of Potsdam, Germany}{}{}{}
\author{Benny Kimelfeld}{Technion, Haifa, Israel}{}{}{}
\author{Ester Livshits}{Technion, Haifa, Israel}{}{}{}
\author{Felix Naumann}{Hasso Plattner Institute, University of Potsdam, Germany}{}{}{}
\author{David Wajc}{Technion, Haifa, Israel}{}{}{}
%\author{Youri Kaminsky}{a}{}{}
%\author{Benny Kimelfeld}{a}{}{}
%\author{Ester Livshits{a}{}{}
%\author{Felix Naumann}{a}{}{}
%\author{David Wajc}{a}{}{}
\date{}

\def\e#1{\emph{#1}}

\def\cells{\mathsf{Cells}}
\def\vals{\mathsf{Vals}}
\def\atts{\mathsf{Atts}}

\newcommand{\defeq}{\vcentcolon=}

\def\set#1{\mathord{\{#1\}}}

\def\reals{\mathbb{R}}

\newcommand{\eat}[1]{}

\def\costsum{\kappa_{\mathsf{sum}}}

\def\naturals{\mathbb{N}}

\def\signature{\mathbf{A}}

\def\key{\mathsf{key}}
\def\fk{\sqsubseteq_\mathsf{k}}

\newcolumntype{R}{>{\columncolor[HTML]{D0D3D4}}l}

\newcommand{\rel}[1]{{\normalfont\textsc{#1}}}
\newcommand{\att}[1]{{\normalfont\textsf{#1}}}
\newcommand{\val}[1]{{\normalfont\texttt{#1}}}

\def\qedexample{\hfill$\Diamond$}

\usepackage{framed}
\usepackage[framemethod=tikz]{mdframed}
\usepackage[skins]{tcolorbox}
\newenvironment{wrapper}[1]
{
	\smallskip
	\begin{center}
		\begin{minipage}{\linewidth}
			\begin{mdframed}[hidealllines=true, backgroundcolor=gray!20, leftmargin=0cm,innerleftmargin=0.3cm,innerrightmargin=0.3cm,innertopmargin=0.375cm,innerbottommargin=0.375cm,roundcorner=10pt]
				#1}
			{\end{mdframed}
		\end{minipage}
	\end{center}
	\smallskip
}

\begin{document}

\maketitle

\begin{abstract}

Datasets often contain values that naturally reside in a metric space: numbers, strings, geographical locations, machine-learned embeddings in a Euclidean space, and so on. 
We study the computational complexity of repairing inconsistent databases that violate integrity constraints, where the database values belong to an underlying metric space. The goal is to update the database values to retain consistency while minimizing the total distance between the original values and the repaired ones. 
We consider what we refer to as \emph{coincidence constraints}, which include key constraints, inclusion, foreign keys, and generally any restriction on the relationship between the numbers of cells of different labels (attributes) coinciding in a single value, for a fixed attribute set. 

\smallskip

We begin by showing that the problem is APX-hard for general metric spaces.
We then present an algorithm solving the problem optimally for tree metrics, which generalize both the line metric (i.e., where repaired values are numbers) and the discrete metric (i.e., where we simply count the number of changed values). 
Combining our algorithm for tree metrics and a classic result on probabilistic tree embeddings, we design a (high probability) logarithmic-ratio approximation  for general metrics. We also study the variant of the problem where each individual value's allowed change is limited. In this variant, it is already NP-complete to decide the existence of any legal repair for a general metric, and we present a polynomial-time repairing algorithm for the case of a line metric.
\end{abstract}

\def\opt{\mathit{Opt}}
\def\true{\mathrm{true}}
\def\false{\mathrm{false}}

\def\subseteqs{\mathrel{\subseteq_{\mathsf{s}}}}
\def\subseteqp{\mathrel{\subseteq_{\mathsf{p}}}}

%\tableofcontents

%\david{must state in theorems that we determine no solution exists, if that is the case...}

\section{Introduction}

A pervasive problem encountered in databases is \emph{inconsistency}, which means that the database violates some integrity constraints that are expected to hold in reality. Such an anomaly can arise due to mistaken data sources (e.g., Web resources), erroneous data recording (e.g., manual form filling), noisy data generation (e.g., machine learning), imprecise data integration (e.g., faulty entity resolution), and so on. Consequently, a well-studied computational problem that arises with inconsistent data is database \emph{repairing}---suggesting a \emph{minimal intervention} needed to correct the database so that all integrity constraints are satisfied. 
% Database repair is typically formulated as an optimization problem, where the goal is to rectify the inconsistent database through some form of \emph{minimal intervention}.
Studies of this general problem differ in the considered types of 
(i)~integrity constraints, 
(ii)~intervention models, 
and 
(iii)~intervention cost measure.
%\david{Flow below is a little off: it  discusses (ii) and (iii) studied, and ignores (i). There's too much suspense here :-)} 

Common types of integrity constraints include functional dependencies~\cite{DBLP:journals/tods/LivshitsKR20,DBLP:conf/icdt/KolahiL09,DBLP:journals/vldb/MiaoZLWC23,DBLP:conf/icdt/GiladIK23,DBLP:conf/sigmod/BohannonFFR05} (including key constraints), the more general denial constraints~\cite{DBLP:journals/is/BertossiBFL08,DBLP:journals/iandc/ChomickiM05,DBLP:conf/icde/ChuIP13}, and inclusion dependencies~\cite{DBLP:conf/sigmod/BohannonFFR05,DBLP:conf/foiks/MahmoodVBN24,DBLP:journals/iandc/ChomickiM05}.
The intervention model refers to the operations applied to repair the database, and these are typically tuple deletion, tuple insertion, and value updates~\cite{DBLP:series/synthesis/2011Bertossi,
DBLP:series/synthesis/2012Fan}.
%\felix{Do we have a reference for these three or one for each?}.
In this work, we focus on the third and seek a so called \emph{update repair}. For this intervention model, 
finding an optimal repair
%\felix{Instead: finding minimal repairs?} 
is almost always shown to be computationally hard and algorithms developed are therefore mostly heuristic or focus on highly specialized cases~\cite{DBLP:journals/tods/LivshitsKR20,DBLP:conf/icdt/KolahiL09,DBLP:conf/icdt/GiladIK23,DBLP:conf/icdt/SaIKRR19,DBLP:conf/icde/ChuIP13,DBLP:conf/sigmod/BohannonFFR05}.
% Finally, cost measures studied include the number of changed cells~\cite{DBLP:journals/tods/LivshitsKR20,DBLP:conf/icdt/KolahiL09}, the sum of changes of numerical values~\cite{DBLP:journals/is/BertossiBFL08},
% %\david{these last two are special cases of our metric problem; how do their models + results compare to ours?}
% or the sum of user-provided costs between the original and updated values~\cite{DBLP:conf/sigmod/BohannonFFR05,DBLP:conf/icde/ChuIP13}, often cast as probabilities of (independent) changes~\cite{DBLP:conf/icdt/GiladIK23,DBLP:conf/icdt/SaIKRR19}. \david{To smooth transition into next paragraph, maybe re-iterate that for most (all?) of these cost measures, the problem is hard. Alternatively, start this last line with ``This state of affairs holds for various cost measures, including ....''?}
% \\
This state of affairs holds for various intervention cost measures studied, including
% Finally, cost measures studied include 
the number of changed cells~\cite{DBLP:journals/tods/LivshitsKR20,DBLP:conf/icdt/KolahiL09}, the sum of changes of numerical values~\cite{DBLP:journals/is/BertossiBFL08},
%\david{these last two are special cases of our metric problem; how do their models + results compare to ours?}
or the sum of user-provided costs between the original and updated values~\cite{DBLP:conf/sigmod/BohannonFFR05,DBLP:conf/icde/ChuIP13}, often cast as probabilities of (independent) changes~\cite{DBLP:conf/icdt/GiladIK23,DBLP:conf/icdt/SaIKRR19}.

To obtain provable guarantees for repair costs in a wide variety of update cost measures, in this work we exploit 
% Our primary goal in this paper is to investigate how we can utilize, for the sake of efficient repairing, 
the fact that database values commonly belong to a structured domain, namely a \emph{metric space}, where update costs of cells abide by the laws of a metric space (positivity, symmetry, and the triangle inequality).
% \david{Question: is there any special case of our problem where we get quantifiably/qualitatively better bounds than previously known? If so, should state this.} 
The obvious example of such a metric space is the space of numerical values \cite{DBLP:journals/is/BertossiBFL08}. Another simple example is the \emph{discrete metric}, where any two different values are a unit distance apart (hence, the cost of a repair is the number of changed cells~\cite{DBLP:journals/tods/LivshitsKR20,DBLP:conf/icdt/KolahiL09}). 
%\david{another example is the number of changes} \benny{At this point, it is not clear how the number of changes plays any role here. This paragraph talks only about values belonging to a metric space, regardless of constraints or repairs.}\david{fair, but this is simply the discrete metric. Can we not convey this simple running/motivating example in the text early on? We could say something like ``another simple, if less immediate example, is the discrete metric, ...''}
%\david{more sophsiticated examples include distance or travel time between map locations and edit distance or distance betweewn embeddings in Euclidean space of textual values}\benny{Hard to read... "edit distance or distance between embeddings in Euclidean space of textual values"... long distance between the first "distance" and "textual values".}\david{OK, how about this: ``
A more sophisticated example is the distance or travel time between map locations.  Yet another example is textual values and distance given by edit distance, or the distance between textual embeddings in Euclidean space~\cite{DBLP:journals/access/PatilBGN23}.
%Another example is map locations along with their pairwise geographic distances or travel times. A metric for textual values can be an edit distance or a distance between embeddings in a Euclidean space~\cite{DBLP:journals/access/PatilBGN23}. 
In fact, several studies have proposed ways of embedding general database cells (including opaque keys) in Euclidean spaces, with the objective that semantically close cells should be mapped to geometrically close vectors, and vice versa~\cite{DBLP:conf/icde/TonshoffFGK23,DBLP:conf/sigmod/CappuzzoPT20,DBLP:conf/sigmod/BordawekarS17}. 
Hence, there is a wealth of value types corresponding to metrics, so a unifying approach for repairing databases with metric values would have wide applications.
%}\david{Too much?}
%Hence, the breadth of value types that can be treated as metric values is highly significant.
%
%\david{We switched from a one-paragraph description (see below, in blue) to half a page. Not a little excessive?} \benny{This box does not exist. It is a proposed phrasing (by Felix I believe), which is not aligned with the way the story is motivated and told in the intro. Please continue as if the boxes do not exist.}\david{I actually suggested the box. I think it's a concise description of the problem, and easy to refer to. I'm uncommenting it to make it easier to contrast with the alternative below it. This box could be moved lower down after all the discussion of the motivation, but I personally like the level of detail here (though I appreciate this is a matter of taste)}
%
Motivated by this observation, we introduce the following general computational problem.

% \eat{
\begin{wrapper}
\textbf{\underline{Metric Database Repair Problem} (informal; see \Cref{sec:framework})}\\
A \e{metric database} consists of a set of \e{cells}, each with a label (attribute) and a value from a metric space. 
A \e{coincidence constraint} lists, for each value, the allowed combinations of numbers of coincident cells of different labels.
A \emph{repair} of an inconsistent database moves cells between points in the metric space (i.e., an update of cells' values).
The goal is to compute a repair (if one exists) minimizing the total distance moved over all cells. The attribute set is fixed, but all else is given as input, including the (finite) metric~space.
\end{wrapper}

Coincidence constraints generalize (unary) inclusion constraints, key constraints, foreign-key constraints, as well as other constraints on cardinalities (e.g., the number of people associated with a company is at most the number of employees of the company according to some external trusted registry). Moreover, this class of constraints is closed under conjunction, disjunction, and negation. For example, we can phrase a constraint stating that every location of a team includes one driver (Driver.location cell), and either engineers (Eng.location cells) or salespeople (SP.location cells), but not both.  

\subparagraph*{Contributions.}

Our first contribution is a formalization of the problem introduced informally above, together with illustrative examples (\Cref{sec:framework}).
We then begin the complexity investigation of this problem by showing that, in the generality presented here, it is NP-hard and even APX-hard (\Cref{sec:hardness}).

Our main technical contribution is a polynomial-time algorithm for finding an optimal repair when the metric is a \emph{tree metric} (\Cref{sec:tree}). This algorithm implies the tractability of the problem for well-studied special cases of the tree metric, namely the \emph{line metric} (i.e., the usual metric on numeric values) and the \e{discrete metric} where, as mentioned above, the cost of a repair is the number of changed cells. Moreover, combining our algorithm for tree metrics with classic results on randomly embedding a general metric into a tree metric~\cite{DBLP:journals/jcss/FakcharoenpholRT04,DBLP:conf/focs/Bartal96}, we establish a (high-probability) logarithmic-factor approximation for general metrics  (\Cref{sec:tree-to-general}).
% \david{Section reference format only a (succinct) suggestion; also section numbers / order should be changed to fit story, so current numbers cited here are probably off}

Next, we investigate two extensions of our work (\Cref{sec:extensions}). We first show how the model and results can be generalized to an infinite metric, such as the full Euclidean space (e.g., with the metric $\ell_p$).
%\david{more standard: $\ell_p$, rather than $\ell_{\mathbf{k}}$} 
Then we discuss the implication of imposing a bound on the amount of change allowed for each individual cell; that is, each point can move by at most some given distance $\tau$. We show that the bound restriction makes the problem fundamentally harder for a general (finite) metric, since testing whether \e{any} repair exists is already NP-complete (even for a single label or two labels with a simple inclusion constraint). On the other hand, we devise a polynomial-time algorithm for finding an optimal repair for the line metric. 
%We leave open the case of a tree metric with a bound restriction, where it remains unknown whether there is a polynomial-time algorithm for finding an optimal repair, or even just test for the existence of a repair.
% \david{As I mentioned before, this last question is not as interesting: we solved the line, and the discrete metric is trivial, so the tree doesn't generalize anything interesting; moreover, the tree embedding doesn't work for this problem (with \e{worst-case} bounds on movement)}

\subparagraph*{Related Work.} 
We focus on related work on update repairs. For a broader view of database repair, see the excellent books \cite{DBLP:series/synthesis/2011Bertossi,DBLP:series/synthesis/2012Fan}.

Upper and lower bounds have been established for the number of changes (which corresponds to the discrete metric) under functional dependencies~\cite{DBLP:journals/tods/LivshitsKR20,DBLP:conf/icdt/KolahiL09}, which are incomparable with our coincidence constraints. Gilad, Imber and Kimelfeld~\cite{DBLP:conf/icdt/GiladIK23} studied a relational model where each cell has a distribution over possible values (and cells are probabilistically independent), and the goal is to find a most probable instantiation that satisfies integrity constraints; one can translate an optimal repair in our model to a most probable world in their model, yet the metric properties are ignored in their work and, again, their study is restricted to functional dependencies. 

Several studies investigated the computation of an optimal repair with a general distance function (that does not necessarily obey the distance/metric axioms); yet, their results are restricted to lower bounds (hardness) and heuristic algorithms without quality guarantees~\cite{DBLP:conf/icde/ChuIP13,DBLP:conf/sigmod/BohannonFFR05}. While Chu et al.~\cite{DBLP:conf/icde/ChuIP13} focused on denial constraints (which are again incomparable to our coincidence constraints), the work of Bohannon et al.~\cite{DBLP:conf/sigmod/BohannonFFR05} is closer to our work, as they considered inclusion dependencies (in addition to functional dependencies). 
Bertossi et al.~\cite{DBLP:journals/is/BertossiBFL08} studied the complexity of optimal repairs (``Database Fix Problem'') for numerical values under the Euclidean space (square of differences), and focused on denial constraints and aggregation constraints; their results for this problem are lower bounds (NP-completeness). 

Finally, there is a weaker relevance of this work to frameworks where the repairs themselves (rather than the database values) are points of the metric space~\cite{DBLP:journals/amai/ArieliDB07,DBLP:journals/fss/ArieliZ16}, and work on
consistent query answering (i.e., determining whether all repairs agree on a query answer) under keys and foreign keys~\cite{DBLP:conf/pods/HannulaW22}.  The work of Kaminsky, Pena and Naumann~\cite{DBLP:journals/pacmmod/KaminskyPN23} combines inclusion constraints and similarity between values within the problem of constraint discovery (with equality replaced by similarity); their work is relevant here in the sense that it can be combined with ours in a larger flow that improves data quality by the discovery of (violated) integrity constraints, which are then handled by a repairing phased.

\eat{
\paragraph*{Contribution and Organization} 
% In \Cref{sec:framework}, we present the formal framework for coincidence constraints and metric repairs. 
% %\david{why is there an overview of the formal description in the paper organization?} 
% In essence, a database consists of a set of \e{cells}, each with a label (attribute) and a value from a metric space. A \e{coincidence constraint} states, for every point in the metric space, how many cells of each attribute should occur in that point. More precisely, it lists the allowed combinations of the numbers of cells for the different labels. This formalism generalizes several kinds of constraints, such as key constraints, inclusion constraints, foreign-key constraints, and cardinality constraints. The goal is to repair a given database that violates the constraints, by moving cells between points in the metric space. The cost of a repair is the total distance over all cells. Computationally, we assume that the set of attributes is fixed, but everything else is given as input, including the metric space (which is then finite).

In \Cref{sec:tree}, we present a polynomial-time (exact) algorithm for the special case of a tree metric. This in particular implies such an algorithm for the line metric (i.e., numbers) and the discrete metric (that simply counts the updated cells regardless of the actual values), both of which are special cases of the tree metric. 

In \Cref{sec:general}, we show that for general metrics the problem is NP-hard (and even APX-hard).\david{presumably this jargon is known to database theorists?}
We then present a randomized approximation algorithm with a ratio logarithmic in the number of cells; we do so by combining our algorithm for tree metrics and classic results on the randomized (approximate) embedding of general metrics into tree metrics~\cite{DBLP:journals/jcss/FakcharoenpholRT04,DBLP:conf/focs/Bartal96}.

In \Cref{sec:extensions}, we discuss two extensions of our work. The first extension considers infinite metric spaces, such as the full line metric and, more generally, the $\ell_p$-metric 
on the Euclidean space $\mathbb{R}^k$. We show how to adapt the framework to an infinite metric, and generalize the approximation result to this case. The second extension allows us to restrict the movement of each cell by some threshold, which we refer to as a \e{bound constraint}, so that we avoid a large change by any individual cell. We show that, under bound constraints, it is NP-hard to even decide whether there is \e{any} repair. On the other hand, we show that in the case of the line metric, the problem can still be solved in polynomial time. 

% Ideas for motivation: metrics: numbers, strings, geometry, vector embedding. Apps: locations based on phone signals, person identification based on an audio recording, integrated data with a similarity function.People's location: one driver and n(car-type) people in each car.
}
\section{Formal Framework}
\label{sec:framework}

We first describe our formal framework, from the basic concepts to the problem definition. 
\subsection{Metric Spaces}
Recall that a \emph{metric space} is a pair $(M,\delta)$ where $M$ is a set of \emph{points} and $\delta:M\times M\rightarrow\reals$ is a function satisfying the following: %\begin{itemize*}
    %\item
    \e{positivity:} $\delta(x,y)\geq 0$ for all $x,y\in M$ and $\delta(x,y)=0$ if and only if $x=y$;
    %\item 
    \e{symmetry:} $\delta(x,y)=\delta(y,x)$ for all $x,y\in M$; 
    and
    %\item 
    \e{the triangle inequality:} $\delta(x,z)\leq \delta(x,y)+\delta(y,z)$ for all $x,y,z\in M$. 
%\end{itemize*}
When $M$ is finite, the metric $(M,\delta)$ can also be seen as having its distances correspond to the length of the shortest paths in an undirected graph, specifically a complete graph on $M$ where each edge $(u,v)$ has weight $\delta(u,v)$.

\begin{figure}[t]
\small
\centering
    \begin{tabular}[t]{ll}
        \multicolumn{2}{R}{\rel{Patient} (P)}\\
        \toprule
        \underline{\att{pid}}        & \att{name}   \\
        \midrule
        \val{437}     & \val{Anna}    \\
        \val{487} & \val{Bill}        \\
        %\val{643} & \val{Quinn Bolton} \\
        \val{719}   & \val{Carl}  \\
        \val{799}    & \val{Darcy}  \\
        \bottomrule
    \end{tabular}
    \quad\quad
    \begin{tabular}[t]{ll}
        \multicolumn{2}{R}{\rel{Registration} (R)}\\
        \toprule
        \underline{\att{pid}} & \att{time}  \\
        \midrule
        \val{779} & \val{10:00}       \\
        %\val{643}   & \val{11 AM}     \\
        \val{437}   & \val{13:00} \\
        \val{199}   & \val{14:00}  \\
        \bottomrule
    \end{tabular}
    \quad\quad
    \begin{tabular}[t]{ll}
        \multicolumn{2}{R}{\rel{Vaccine} (V)}\\
        \toprule
        \att{pid} & \att{nurse}   \\
        \midrule
        \val{719} & \val{018}  \\
        \val{481} & \val{017}    \\
        \val{987} & \val{078}    \\
        \bottomrule
    \end{tabular}
    \quad\quad
    \begin{tabular}[t]{ll}
        \multicolumn{2}{R}{\rel{UsedShots} (U)}\\
        \toprule
        \underline{\att{nurse}}                  & \att{\#shots}    \\
        \midrule
        \val{078} & \val{5}       \\
        \val{017} & \val{1}      \\
        \bottomrule
    \end{tabular}
    \caption{Example relations; same attribute names express intended foreign keys.}
    %\felix{Shall we note the keys for Registration?}
 \label{fig:running-relations}
\end{figure}
\begin{figure}[b]
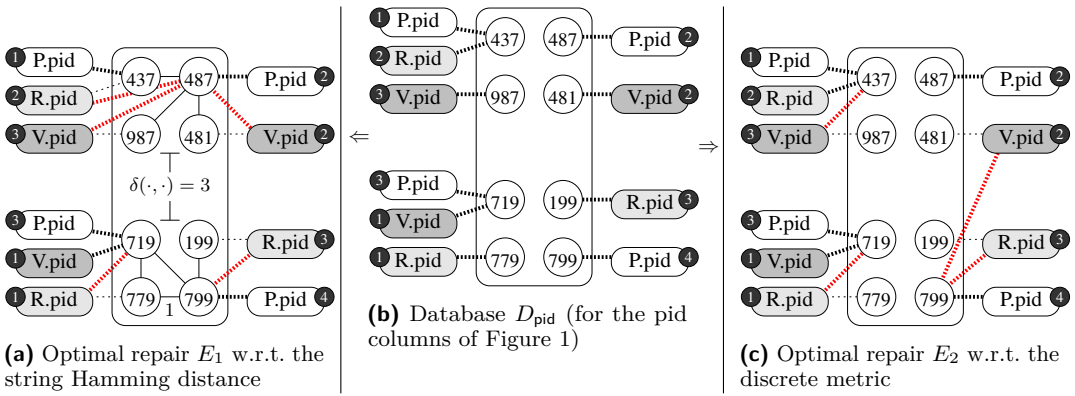

    \begin{subfigure}[t]{0.3\textwidth}
    \scalebox{0.8}{\input{running-r-sum.pspdftex}}
    \caption{\label{fig:running-sum}Optimal repair $E_1$ w.r.t.~the string Hamming distance}
    \end{subfigure}
    \;\vrule\quad
    \parbox[b]{0.3\textwidth}{
    \begin{subfigure}[t]{0.3\textwidth}
    \scalebox{0.8}{\input{running.pspdftex}}
    \caption{Database $D_{\att{pid}}$ (for the pid columns of \Cref{fig:running-relations})\label{fig:running-D}}
    \end{subfigure}
    \vskip-1em
    }
    \quad\vrule\;
    \begin{subfigure}[t]{0.3\textwidth}
    \scalebox{0.8}{\input{running-r-cnt.pspdftex}}
    \caption{\label{fig:running-cnt}Optimal repair $E_2$ w.r.t.~the discrete metric}
    
    \end{subfigure}
    \caption{\label{fig:running}Optimal repairs of a database $D_{\att{pid}}$ (middle) according to two metric spaces $(M,\delta)$ (left and right) over the person identifiers (pid).}
    %\felix{Can we emphasize better that the reader should start in the middle. Raise the middle figure a bit.}
\end{figure}

\begin{figure}[t]
\centering
    \scalebox{0.8}{\input{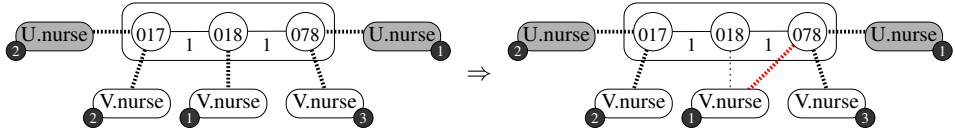}}
    \caption{\label{fig:nurse}Database $D_{\att{nurse}}$ (on the left) and an optimal repairs $E$ according to the Hamming distance and the discrete distance (on the right).}
\end{figure}

We will discuss three special cases of metric spaces $(M,\delta)$, and we will distinguish between them via a subscript of the distance function $\delta$.
\begin{itemize}
    \item A \e{line} metric $(M,\delta_{\reals})$, where $M\subseteq\mathbb{R}$ and  
    $\delta_{\reals}(x,y)=|x-y|$.
    \item A \e{discrete metric} $(M,\delta_{\neq})$, where $\delta_{\neq}(x,y)=1$ whenever $x\neq y$.
    \item A \e{tree} metric $(M,\delta_T)$, where $T$ is a weighted tree with vertex set $M$, and $\delta_T(x,y)$ is the weighted distance (i.e., the sum of weights along the unique path) in $T$ between $x$ and $y$.
\end{itemize}

\subsection{Databases}
%\david{This section feels quite notation heavy for what it is, IMO}

We assume three countably infinite sets: $\atts$ is the set of all \emph{attributes}, $\cells$ is a set of all \emph{cells}, and $\vals$ is a set of \emph{values}. Each cell $c$ is labeled with an attribute $\lambda(c)\in\atts$. If $\lambda(c)=A$, then we call $c$ an \emph{$A$-cell}. By a \e{domain} we refer to a set $M\subseteq\vals$ over values.

A \emph{metric database} (or just \e{database} for short) is a mapping $D$ from a finite set of cells, denoted $\cells(D)$, to values. Hence, we view a database as a function $D:\cells(D)\rightarrow\vals$. We denote by $\vals(D)$ the set of values that occur in $D$, that is, $\vals(D)\defeq\set{D(c)\mid c\in\cells(D)}$. We say that $D$ is a database \e{over} a domain $M\subseteq\vals$ if $\vals(D)\subseteq M$. If $D$ is a database and $A\in\atts$, then we denote by $D[A]$ the restriction of $D$ to its $A$-cells; hence, $\cells(D[A])=\set{c\in\cells(D)\mid \lambda(c)=A}$, and $D[A](c)=D(c)$ for all $c\in\cells(D[A])$. If $D$ is a database and $v\in\vals$, then we denote by $D^{-1}(v)$ the set of cells $c\in\cells(D)$ with $D(c)=v$. We denote by $\atts(D)$ the set of attributes that occur in $D$, that is, $\atts(D)\defeq\set{\lambda(c)\mid c\in\cells(D)}$. 

A \emph{signature} $\signature$ is a sequence $(A_1,\dots,A_q)$ of attributes. We denote by $\atts(\signature)$ the attribute set $\set{A_1,\dots,A_q}$. An \emph{instance} of $\signature$ is a database $D$ such that $\atts(D)\subseteq\atts(\signature)$. In the remainder of this paper, we always assume the presence of a signature $\signature$, and a database $D$ that we mention is implicitly assumed to be an instance of $\signature$.

\begin{example}\label{example:running:intro}
The relational database of \Cref{fig:running-relations} %\david{why not placed here for ease of reference?} 
consists of a registry of vaccinations in an improvised medical facility. (We later discuss the errors in the relations.) From the ``pid'' columns we derive the database $D_{\att{pid}}$ (in our model) of  \Cref{fig:running-D} over the signature 
$(\att{P.pid},\att{R.pid},\att{V.pid})$. Each cell $c$ is depicted by a rounded rectangle with its label $\lambda(c)$ written inside the box and its value $D_{\att{pid}}(c)$ connected to the rectangle by a dotted line. The number attached to each cell denotes the row number of the cell in \Cref{fig:running-relations}. (It  is not part of the formal model and is given here just to help connecting between the figures.) For example, the pid attribute of row~1 of \rel{Patient} gives rise to the left-top cell $c$ with $\lambda(c)=\att{P.pid}$ and 
$D_{\att{pid}}(c)=\val{437}$.
We can similarly obtain the database $D_{\att{nurse}}$ over the signature
$(\att{V.nurse},\att{U.nurse})$ from the nurse columns of the relations. This database is depicted on the left side of \Cref{fig:nurse}. We use both $D_{\att{pid}}$ and $D_{\att{nurse}}$ as running examples for this section. 
\qedexample
\end{example}

\subsection{Coincidence Constraints}
\label{example:coincidence-constraints}
Let $\signature=(A_1,\dots,A_q)$ be a signature. If $D$ is database and $v\in\vals$, then the sequence 
$p_D(v)\defeq(|D[A_1]^{-1}(v)|,\dots,|D[A_q]^{-1}(v)|)$ lists the number of $A_i$-cells with the value $v$ for $i=1,\dots,q$. We call $p_D(v)$ the \emph{coincidence profile} of $v$ (stating how many cells of each attribute coincide at $v$). 

Let $M\subseteq\vals$ be a domain. A \emph{coincidence constraint over $M$} (\emph{w.r.t.~$\signature$}) is a function $\Gamma$ that maps every value $v\in M$ to a subset $\Gamma(v)\subseteq\naturals^q$, stating the allowed coincidence profiles for every value in $M$.  A database $D$ over $M$ satisfies $\Gamma$, denoted $D\models\Gamma$, if $p_D(v)\in\Gamma(v)$ for all $v\in M$; that is, the coincidence profile of every value in $M$ is allowed by $\Gamma$. Conversely, $D$ \e{violates} $\Gamma$ if $p_D(v)\notin\Gamma(v)$ for at least one value $v\in M$, and then we denote it by $D\not\models\Gamma$.

Note that the class of coincidence constraints is closed under conjunction (intersection), disjunction (union), and negation (complement). That is, if $\Gamma_1$ and $\Gamma_2$ are coincidence constraints over $M$, then $\Gamma_1\cup\Gamma_2$ (that maps every $v\in M$ to $\Gamma_1(v)\cup\Gamma_2(v)$) is a coincidence constraint over $M$, and so are $\Gamma_1\cap\Gamma_2$ and $\naturals^q\setminus\Gamma_2$. Also note that a coincidence constraint may be infinite, but for a given database, only a finite set is relevant as the total (finite) number of relevant cells is determined by the input.

%\david{nitpicking: negation was mentioned above, and then set difference is shown; it's obtained from negation and conjunction, but the mismatch is maybe a little distracting}

%\begin{example}\label{example:coincidence-constraints}
For illustration, let $\signature=(A_1,\dots,A_q)$, and let $M$ be a set of values. The \e{key constraint} $\key(A_j)$ states that no two $A_j$-cells can have the same value. Hence, $\key(A_j)$ says that every value can have at most one $A_j$-cell. The constraint $\key(A_j)$ can be expressed as the following function (which is the same on every point $v$):
$$
\Gamma_{\key(A_j)}(v)\defeq
\set{(i_1,\dots,i_q)\mid i_j\leq 1}
$$
The \e{inclusion constraint} $A_\ell\sqsubseteq A_j$ states that every value in an $A_\ell$-cell must also occur in an $A_j$-cell. Hence,
$A_\ell\sqsubseteq A_j$ states that every value that has one or more $A_j$-cells must also have one or more $A_\ell$ cells, and can be expressed by the following coincidence constraint:
$$
\Gamma_{A_\ell\sqsubseteq A_j}(v)\defeq
\set{(i_1,\dots,i_q)\mid i_\ell=0 \mbox{ or } i_j>0}
$$
Finally, the \e{foreign-key constraint} $A_\ell\fk A_j$ is the conjunction of $\key(A_j)$ and $A_\ell\sqsubseteq A_j$, hence expressed by the coincidence constraint $\Gamma_{A_\ell\fk A_j}\defeq
\Gamma_{\key(A_j)}\cap \Gamma_{A_\ell\sqsubseteq A_j}$.

%\qedexample
%\end{example}

A coincidence constraint $\Gamma$ may, in principle, pose a different restriction on every value of~$M$. We also consider the uniform case where $\Gamma$ is the same everywhere. Formally, we say that $\Gamma$ is \e{uniform} if $\Gamma(v)=\Gamma(u)$ for all pairs $u$ and $v$ of values in $M$. In this case, we may write just $\Gamma$ instead of $\Gamma(v)$, and treat $\Gamma$ simply as a set of coincidence profiles $(i_1,\dots,i_q)$. For example, each of $\Gamma_{\key(A_j)}$, $\Gamma_{A_\ell\sqsubseteq A_j}$ and $\Gamma_{A_\ell\fk A_j}$ is uniform since its definition does not depend on $v$, hence we can write, for instance, $\Gamma_{\key(A_j)}\defeq
\set{(i_1,\dots,i_q)\mid i_j\leq 1}$.

\begin{example}\label{example:pid-constraints}
%We present multiple exemplary coincidence constraints in the relations in Figure~\ref{tab:example_relation}.
%These relations represent data of a low-resource facility for vaccination tracking. 
%While the \e{patient} relation is clean and holds all patient information, the other relations comprise data from handwritten forms, so they are noisy.
%First, the \e{pId} attribute is a key in the \e{patient}, \e{registration}  and \e{vaccine} relation.
%Second, there is a valid \e{inclusion constraint} between two \e{pId} columns: $\Gamma_{A_{R.pId} \sqsubseteq A_{P.pId}}$, and $\Gamma_{A_{V.pId} \sqsubseteq A_{R.pId}}$.
%Hence, both foreign key constraints $A_{R.pId} \fk A_{P.pId}$, and $A_{V.pId} \fk A_{R.pId}$ also hold on the sample data.
%Moreover, $A_{V.nurse} \fk A_{U.nurse}$.
%Third, $
%\Gamma(v)
%\set{(i_1,\dots,i_q)\mid i_4 \leq A_{U.\#shots}[v]}.
%$
We continue with our running example. The scenario we consider is that \Cref{fig:running-relations}
describes a relational database with a combination of clean data in the \rel{Patient} and \rel{UsedShots} relations (where administrators insert records), and noisy data in the \rel{Registration} and \rel{Vaccine} relations, where records are entered (manually or via OCR) from handwritten forms that may include mistakes or ambiguous letters.

%\david{original database still not prominent enough; should we just start from left to right? ((a) original database, (b) optimal repair w.r.t.~hamming distance, (c) optimal w.r.t.~discrete metric)}

Consider again the databases $D_{\att{pid}}$ in Figure~\ref{fig:running}(b)
%and $D_{\att{nurse}}$ in Figure~\ref{fig:nurse}\felix{Do we actually talk about nurse in this example? It is the topic of the next example and can thus maybe omitted here to not confuse the readers.} 
derived from \Cref{fig:running-relations}. The constraints that we associate to the domain of $D_{\att{pid}}$ is 
$\att{V.pid}\fk \att{R.pid} \fk \att{P.pid}$,
%\felix{Is the first constraint really intended to be an FK constraint and not just an inclusion constraint?} 
stating that every vaccine is
given to a registered patient who, in turn, is in the patient registry, and no two registrations and patient records coincide in their value. For the relations of \Cref{fig:running-relations} it means that \att{pid} is a foreign key from \rel{Vaccine} to 
\rel{Registration} and from  \rel{Registration} to \rel{Patient}. In our formalism, this translates to the uniform constraint $\Gamma$, where 
$\Gamma(v)\defeq\Gamma_{\att{V.pid}\fk \att{R.pid} }(v)\cap \Gamma_{\att{R.pid}\fk \att{P.pid}}(v)$ for very value $v$. Note that the constraint is violated by $D_{\att{pid}}$ since, for example, \val{719} is a value of a V.pid-cell but not an R.pid-cell, and \val{779} is a value of an R.pid-cell but not a P.pid-cell.\qedexample
\end{example}

\begin{example}\label{example:nurse-constraints}
Still within our running example, for the domain of $D_{\att{nurse}}$ (\Cref{fig:nurse}), recall that our signature is
$(\att{V.nurse},\att{U.nurse})$. Our constraint $\Gamma$ is for this domain defined by
$$\Gamma(v)\;\defeq\;\Gamma_{\att{V.nurse}\sqsubseteq \att{U.nurse}}(v) 
\; \cap\; 
\Gamma_{\att{U.nurse}\sqsubseteq \att{V.nurse}}(v) 
\; \cap\; 
\set{(i_1,i_2)\mid i_1\leq \mbox{shots}(v)}$$
where $\mbox{shots}(v)$ is the number in the row of $v$ in the (clean) relation \rel{UsedShots} of \Cref{fig:running-relations}, that is, $\mbox{shots}(\val{078})=5$ and
$\mbox{shots}(\val{017})=1$.  Hence, $\Gamma$ states that every vaccinating nurse should occur in the registry of the used shots (i.e., $\att{V.nurse}\sqsubseteq \att{U.nurse}$) and vice versa, and the number of times a nurse is recorded as giving a vaccine 
(that is, the number of V.nurse cells per nurse identifier) is at most the number of used shots recorded for that nurse. Note that $\Gamma$ is non-uniform since it differs for the two values \val{078} and \val{017}.
It is violated since the value \val{018} has a V.pid-cell but not a U.pid-cell.
\qedexample
\end{example}

\subsection{Repairs}
Let $\signature=(A_1,\dots,A_q)$ be a signature, $(M,\delta)$ be a metric space, and $\Gamma$ be a coincidence constraint over $M$. An \e{inconsistent database} is a database $D$ over $M$ such that $D\not\models\Gamma$. A \e{repair} of $D$ is a database $E$ such that $\cells(E)=\cells(D)$ and $E\models\Gamma$. The \e{cost} of a repair $E$ is the cumulated distance that the values of $D$ undergo in the transformation to $E$. We allow to differentiate the cost between different attributes, so we assume a global weight function $w:\atts(\signature)\rightarrow\mathbb{R}_{\geq 0}$.\footnote{For example, it may be the case that we trust $A_i$-cells more than we trust $A_j$-cells, so the same movement by distance $\epsilon$ can contribute differently to the cost of $E$; this will be reflected in $w(A_i)>w(A_j)$.}
Hence, we define the cost of a repair $E$, denoted $\kappa(D,E,\delta)$, by
$$\kappa(D,E,\delta)\defeq\sum_{c\in\cells(D)}w(\lambda(c))\cdot \delta(D(c),E(c))\,.$$

\begin{example}
Continuing \Cref{example:pid-constraints},  assume that  $w(\att{P.pid})=\infty$ (or some large number), since P.pid-cells are assumed to be clean, and that $w(\att{R.pid})=1$ and $w(\att{V.pid})=1.1$ (since V-pid cells are deemed slightly more reliable than R.pid-cells).
Figures~\ref{fig:running-sum} and~\ref{fig:running-cnt} show optimal repairs of $D_{\att{pid}}$ of \Cref{fig:running-D} with respect to $(M,\delta_1)$ and 
$(M,\delta_2)$, respectively, where 
\begin{itemize}
    \item $M=\set{\val{437},\val{487},\val{987},\val{481},\val{719},\val{199},\val{779},\val{799}}$ (i.e., $\vals(D_{\att{pid}})$);
    \item $\delta_1$ is the Hamming distance between strings (e.g., $\delta_1(\val{437},\val{487})=1$ and $\delta_1(\val{437},\val{987})=2$);
    \item $\delta_2$ is the discrete distance over $M$ (e.g., $\delta_2(\val{437},\val{487})=\delta_2(\val{437},\val{987})=1$).
\end{itemize} 
Note that $\kappa(D_{\att{pid}},E_1,\delta_1)=2\cdot 1+3\cdot 1.1=5.3$, since there are two changes of R.pid and three of V.pid, each of distance $1$. (Changes are marked by red lines.) Also note that $\kappa(D_{\att{pid}},E_2,\delta_2)=4.2$ as there are two changes of R.pid and two of V.pid, but $\kappa(D_{\att{pid}},E_2,\delta_1)=7.5$ since:
\begin{align}
\att{R.pid}: \quad &1\cdot\delta_1(\val{779},\val{719})+1\cdot\delta_1(\val{199},\val{799})=1+1=2;\\
\att{V.pid}: \quad &1.1\cdot \delta_1(\val{987},\val{437})+1.1\cdot\delta_1(\val{481},\val{799})=2.2+3.3=5.5\,.
\end{align}
In particular, note that $E_2$ is worse that $E_1$ for the metric $\delta_1$.
\qedexample
\end{example}

\begin{example}
Continuing \Cref{example:nurse-constraints}, recall that $D_{\att{nurse}}$ violates $\Gamma$ because the value $\val{018}$ is associated with a V.nurse-cell but no U.pid-cells. Assume that $w(\att{U.nurse})=\infty$ since the U.nurse-cells are assumed to be clean, but $w(\att{V.nurse})=1$. Under both the discrete metric and the Hamming distance, an optimal repair is obtained by changing the cell $c$ of $\val{018}$ to $\val{078}$, as illustrated at the right of \Cref{fig:nurse}.  Note that changing the value of $c$ to \val{017} would not be legal, since it would violate the constraint $\set{(i_1,i_2)\mid i_1\leq \mbox{shots}(\val{017})}$, as $i_1$ would be two (corresponding to two V.nurse-cells with the value \val{017} while  $\mbox{shots}(\val{017})=1$. 
\qedexample
\end{example}

\subsection{Computational Problem}\label{sec:hardness}
We study the complexity of computing a low-cost repair. Formally, we assume a fixed signature $\signature$. The input consists of a finite metric space $(M,\delta)$, a finite coincidence constraint $\Gamma$ over $M$, and an inconsistent database $D$ over $M$. The goal is to compute an \e{optimal} repair, that is, a repair $E$ such that $\kappa(D,E,\delta)\leq \kappa(D,E',\delta)$ for all repairs $E'$, or declare that no repair exists. We will also study the approximate version of finding an $\alpha$-optimal repair, where $\alpha$ is a number (or a numeric function of the input), which is a repair $E$ such that $\kappa(D,E,\delta)\leq\alpha\cdot \kappa(D,E',\delta)$ for all repairs $E'$.

Note that our definition of a coincidence constraint $\Gamma$ allows the set $\Gamma(v)$ to be infinite, and examples of such constraints are shown in \Cref{example:coincidence-constraints}. However, for a given database $D$, only a finite (polynomial-size) subset of $\Gamma(v)$ is relevant---the profiles $(i_1,\dots,i_q)$ where every number $i_j$ is at most $|D[A_j]|$. Hence, the assumption that $\Gamma$ is given as part of the input is not a limitation. We will revisit this assumption when we study infinite metrics (\Cref{sec:infinite}).

It may be unclear upfront whether we can even test in polynomial time whether \e{any} repair exists (i.e., our version of the \e{existence of repair} problem~\cite{bertossi2011database}). 
% \david{might it? it's been a while since we started thinking about this problem, but was there any point where we had reason to suspect this?} \benny{I do not understand this comment, David. Do you see an immediate feasibility testing?}\david{I don't, but unless the reference you have above shows that this problem is hard for some other setting, it might strike a reviewer as strange to state that it's unclear that it's solvable}
It follows immediately from our later results on optimal repairs (e.g., \Cref{theorem:T}) that this problem is, indeed, solvable in polynomial time. Yet, our first result in the next section 
states hardness of approximation, even when a repair is guaranteed to exist.
%\david{This is where we want the hardness result}
%the following.
\subsection{Hardness}

The next results states that for general input metrics, it is NP-hard to approximate the optimal repair beyond some fixed ratio. (See proof in \Cref{app:hardness}.)
Recall that APX-hardness means that there is some constant $\alpha>1$ such that there is no polynomial-time algorithm for computing an $\alpha$-approximation unless $\mbox{P}=\mbox{NP}$. The proof is via a PTAS reduction from the problem of finding a minimum cover by 3-sets.
%\david{can we state the constant? (do we want to?)}
%\david{can we make this theorem statement more succinct? Maybe drop the signature notation, and just say APX-hardness outside the theorem statement and then just "there exists a constant..." in the theorem statement? Also, the last line can be shortened. "This hardness holds even if the weight ..."}

\begin{restatable}{thm}{apxhard}\label{thm:apx-hard}
% Let $\signature=(A_1,\dots,A_q)$ be a signature such that $q>1$. \david{can we not just state this as having a signature of size 2?}
% Minimizing the cost of a repair is APX-hard, even if the weight $w$ is uniform and $\Gamma$ is the inclusion constraint $A_1\sqsubseteq A_2$ on every point.
Let $\signature=(A_1,\dots,A_q)$ be a signature with $q\geq 2$. Minimizing the cost of a repair is APX-hard, even if the weight $w$ is uniform, $\Gamma$ is uniformly the inclusion constraint $A_1\sqsubseteq A_2$, and a repair is guaranteed to exist.
\end{restatable}

%\begin{proof}[Proof (Sketch)]
%\end{proof}

The remainder of this paper is therefore dedicated to obtaining optimal algorithms for metrics of interest (notably, the line metric and discrete metric), and providing provable approximation algorithms for general metrics.
%\david{OK with this closing sentence for the section?}
\section{Algorithms}
In this section we provide our algorithms for our metric database repair problem. We start with an exact algorithm for the line metric and discrete metric, and (as we shall see) more generally tree metrics.
Our exact algorithm for the latter will also play a role in our approximation algorithm for general metrics in the second half of this section.

\subsection{Algorithm for Tree Metrics}
\label{sec:tree}
Many problems in tree metrics are amenable to dynamic programming approaches, allowing for polynomial-time algorithms for problems that might be intractable on general metrics. This is also the case here. In this section, we devise a polynomial-time algorithm for finding an optimal repair in the case of a tree metric. 
Formally, we will prove that:
\begin{theorem}\label{theorem:T}
    An optimal repair can be found in polynomial time (if exists), given a tree metric space $(M,\delta_T)$, a coincidence constraint $\Gamma$ over $M$, and an inconsistent database $D$.
\end{theorem}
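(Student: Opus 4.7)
Root the tree $T$ at an arbitrary vertex $r$ and, for each vertex $v$, let $T_v$ denote its subtree and $e_v$ the edge from $v$ to its parent (weight $w_{e_v}$). For the input database $D$, let $n_i(v) \defeq |\{c \in \cells(D[A_i]) : D(c) \in T_v\}|$, and for any candidate repair $E$ with chosen profiles $p_u \in \Gamma(u)$ at each vertex $u$, let $m_i(v) \defeq \sum_{u \in T_v} p_u[i]$. The crux is the \emph{structural decomposition} of the transportation cost: for any family of target counts $\{m_i(v)\}$, the minimum cost of actually moving cells to realize those counts equals
\[
\sum_{v \neq r} w_{e_v} \sum_{i=1}^q w(A_i)\,|n_i(v) - m_i(v)|.
\]
This reduces to a classical fact about transportation on trees, applied independently to each attribute: different $A_i$'s do not interact (distance is the same regardless of label, modulo the multiplicative $w(A_i)$), and within a single attribute any two cells crossing an edge in opposite directions can be re-paired to strictly lower the cost, so at optimum the flow across $e_v$ equals the absolute net imbalance $|n_i(v) - m_i(v)|$.

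\textbf{Dynamic program.} Given the decomposition, the problem reduces to choosing profiles $p_u \in \Gamma(u)$ for every $u \in M$ satisfying $\sum_u p_u[i] = |D[A_i]|$ so as to minimize the expression above. For each vertex $v$, maintain a table $\tau_v : \{0,\dots,N\}^q \to \reals_{\geq 0} \cup \{\infty\}$, where $N \defeq \sum_i |D[A_i]|$ and $\tau_v(m)$ is the minimum cost (from edges strictly inside $T_v$) of choosing profiles on $T_v$ with $\sum_{u \in T_v} p_u = m$. At a leaf $v$, set $\tau_v(p) = 0$ for each $p \in \Gamma(v)$ with $\sum_i p_i \le N$ and $\infty$ elsewhere. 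At an internal vertex $v$ with children $c_1,\dots,c_k$, first absorb each child's parent-edge cost by setting
\[
\tau'_{c_j}(m) \defeq \tau_{c_j}(m) \,+\, w_{e_{c_j}} \sum_{i=1}^q w(A_i)\,|n_i(c_j) - m_i|,
\]
then combine children by $(\min,+)$-convolution, $S_v(m) \defeq \min \sum_j \tau'_{c_j}(m^{(j)})$ subject to $\sum_j m^{(j)} = m$, and finally add $v$'s own choice, $\tau_v(m) \defeq \min_{p \in \Gamma(v),\, p \le m} S_v(m-p)$. The optimum cost is $\tau_r(|D[A_1]|,\dots,|D[A_q]|)$; if this value is $\infty$, no repair exists. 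Standard back-pointers produce the actual repair $E$.

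\textbf{Runtime and main obstacle.} Because the arity $q$ is fixed (the signature is part of the problem definition, not the input), each table has at most $(N+1)^q$ entries. Pairwise $(\min,+)$-convolution takes $O(N^{2q})$; incorporating $v$'s profile costs at most $O(|\Gamma(v) \cap [0,N]^q| \cdot N^q) = O(N^{2q})$; and the relevant profiles in $\Gamma(v)$ (those with entries $\le N$) can be enumerated in polynomial time from the input encoding (as noted in the paper). Summed over all $O(|M|)$ vertices and $O(|M|)$ parent-child combinations, the running time is $\mathrm{poly}(|M|, N)$. The main obstacle, and the only part that really uses tree structure as opposed to generic metric, is the decomposition claim: it must be justified that once the counts $m_i(v)$ on each subtree are fixed, there really is a feasible assignment of cells to target positions achieving cost $\sum_{v\neq r} w_{e_v} \sum_i w(A_i)\,|n_i(v)-m_i(v)|$ (existence follows by induction on the tree, sending the surplus side's cells across $e_v$ one at a time), and that no assignment can do better (immediate by swapping argument on any edge with bidirectional flow). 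Everything else is bookkeeping.
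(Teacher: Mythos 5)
Your proposal is correct and follows essentially the same route as the paper: a bottom-up dynamic program over subtrees indexed by per-attribute cell counts, where the cost of crossing an edge is charged as edge weight times the (attribute-weighted) number of crossing cells. The differences are presentational only --- you make the edge-by-edge transportation decomposition explicit and handle arbitrary-degree vertices via iterated $(\min,+)$-convolution, whereas the paper first binarizes the tree with zero-length edges and charges entering/exiting cells to the subtree root, which telescopes to the same edge-flow accounting.
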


Note that \Cref{theorem:T} implies, in particular, that we can test in polynomial time whether any repair exists; this can be done by applying the theorem to an arbitrary tree metric over the points. 
Moreover, the tractability of tree metrics implies the same for other basic metrics:

\begin{corollary}\label{cor:line-and-discrete}
An optimal repair can be found in polynomial time (if exists) in the case of a line  metric $(M,\delta_{\reals})$, and in the case of the discrete metric $(M,\delta_{\neq})$ over $M$.
\end{corollary}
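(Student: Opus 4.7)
The plan is to reduce each case directly to the tree metric setting of \Cref{theorem:T}.

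For the line metric $(M,\delta_{\reals})$ with $M=\{x_1<x_2<\cdots<x_n\}\subseteq\reals$, I would take $T$ to be the path on $M$ with edges $(x_i,x_{i+1})$ of weight $x_{i+1}-x_i$. A telescoping along the unique $x_i$-to-$x_j$ path in $T$ gives $\delta_T(x_i,x_j)=|x_i-x_j|=\delta_{\reals}(x_i,x_j)$, so the line metric is already a tree metric on $M$ and \Cref{theorem:T} applies verbatim in polynomial time.

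For the discrete metric $(M,\delta_{\neq})$ the reduction is slightly more delicate, since when $|M|\ge 3$ the uniform metric is not realisable as a tree metric whose vertex set equals $M$: the unique tree path between two points forces additivity of distances through any intermediate vertex, which immediately breaks uniformity. The classical workaround is a single Steiner point. Concretely, I would introduce a fresh value $r\notin M$, take $T$ to be the star on $M\cup\{r\}$ centred at $r$ with every edge $(r,v)$ of weight $1/2$, and augment the input by adding $r$ to the domain and setting $\Gamma(r)\defeq\{(0,\dots,0)\}$, which forbids any cell from being placed at $r$. Then $\delta_T(u,v)=1$ on $M$ matches $\delta_{\neq}$ exactly, and since the original database uses no $r$-cells, the repairs of the augmented instance are in cost-preserving bijection with the repairs of the original, so the optimum is preserved and \Cref{theorem:T} applied to the augmented instance yields an optimal repair in polynomial time.

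The only conceptual obstacle is the discrete case: the discrete metric on three or more points is genuinely not a tree metric on those very points, so one cannot simply fit it into \Cref{theorem:T} directly. The Steiner-point-with-trivial-constraint trick above handles this, and the remaining verification --- that the augmentation adds only $O(1)$ points, preserves feasibility, and preserves every repair cost --- is immediate from the definitions.
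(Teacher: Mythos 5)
Your proposal is correct and matches the paper's own proof essentially verbatim: the paper also realizes the line metric as a weighted path on $M$ and the discrete metric as a star with a new center vertex, unit-half edge weights, and the constraint $(0,\dots,0)$ at the center to forbid placing cells there, then invokes \Cref{theorem:T}. No gaps.
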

\begin{proof}
Let $M=\set{v_1,\dots,v_n}$. This corollary follows from \Cref{theorem:T} by casting the two metrics as tree metrics, as illustrated in \Cref{fig:to-t}.
The case of a line metric $(M,\delta_{\reals})$ is straightforward; if (w.l.o.g.) $v_1<v_2<\dots<v_n$, then $(M,\delta_{\reals})$ is the same as the tree metric $(M,\delta_T)$ where $T$ is the path $v_1\,\mbox{---}\dots\mbox{---}\,v_n$ where the weight of each edge $\set{v_i,v_{i+1}}$ is $v_{i+1}-v_i$. In the case of the discrete metric $(M,\delta_{\neq})$ over $M$, we take as tree a star with a new vertex $v'$ as center, and leaves $v_1,\dots,v_n$. The weight of every edge is $1/2$. To make sure that (optimal) repairs do not place any cell in $v'$, we define $\Gamma(v')=(0,\dots,0)$.
\end{proof}

\begin{figure}[t]
    \input{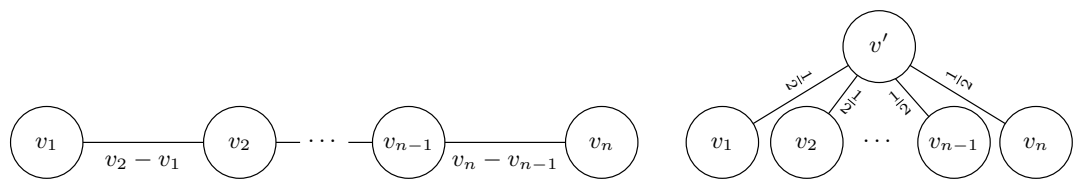}
    \caption{The line metric $(M,\delta_{\reals})$ (left) and discrete metric $(M,\delta_{\neq})$ (right) cast as tree metrics, with $M=\set{v_1,\dots,v_n}$.\label{fig:to-t}}
 \end{figure}

%We will prove \Cref{lemma:sum-T} in \Cref{sec:sum-general-dp}. Recall that \Cref{prop:non-inventive} implies that we lose at most a factor of $2$ by making the assumption of non-inventiveness. 

%\subsubsection{Proof of \Cref{lemma:sum-T} via Dynamic Programming}\label{sec:sum-general-dp}

In the remainder of this section, we prove \Cref{theorem:T} by showing the repairing algorithm.
% \subsection{Algorithm}
Throughout this section, we assume the schema $\signature=(A_1,\dots,A_q)$ and a given input $(M,\delta_T,\Gamma,D)$. We further assume that the tree $T$ itself is given (or, otherwise, we can reconstruct it from $\delta_T$). 

\subparagraph*{Translation into a binary tree.}
We first transform the tree $T$ into a binary tree where every internal vertex has precisely two children. We do so by introducing new vertices with distance zero to their parents. While this construction violates the property of non-zeroness of the metric, this does not matter for the algorithm since it is optimal even for a pseudometric (where distinct points can be of distance $0$). The end result is that every point in $M$ is a vertex of $T$, and some vertices of $T$ are not in $M$ (as we introduced them for the construction). Let $M'$ be the set of new vertices (hence, $M\subseteq M'$). We extend $\Gamma$ to $M'$ by defining $\Gamma'(v)=\Gamma(v)$ for every $v\in M$ and $\Gamma'(v)=(0,\dots,0)$ for every $v\in M'\setminus M$. This ensures that our solution places no cells in the new vertices, and so, the result is a legal repair.

In the remainder of this section, we will assume that $T$ is binary to begin with, and that $M$ and $\Gamma$ are, to begin with, the above constructed $M'$ and $\Gamma'$.

\subparagraph*{Placement.}
The algorithm deploys dynamic programming that processes $T$ bottom-up, leaf to root. For a vertex $v$ of $T$, we denote by $T_v$ the subtree of $T$ rooted at $v$, and by $D_v$ the subset of $D$ with cells having points inside $T_v$.

Let $\signature=(A_1,\dots,A_q)$.
For $j=1,\dots,q$, let $n_j$ be the number of $A_j$-cells of $D$, that is, $n_j\defeq |D[A_j]|$. Let $t_1,\dots,t_q$ be integers, where each $t_j\in[0,n_j]$. A $(t_1,\dots,t_q)$-\e{placement} in $T_v$ is a \e{consistent} database $E$ that is obtained by positioning $t_1+\dots+t_q$ cells of $D$ in $T_v$, where the number of $A_j$-cells is $t_j$. 
Some of these cells may belong to $D_v$ and others outside of $D_v$. Cells of the former kind are called \e{resident cells} and those of the latter kind are \e{visitor cells}.
The \emph{cost} of $E$ is the sum $\sum_{c\in\cells(E)\cup\cells(D_v)}w(\lambda(c))\cdot \Delta(c)$ where:
$$
\Delta(c)\defeq
\begin{cases}
    \delta_T(D(c),E(c)) & \mbox{if $c\in \cells(D_v)\cap \cells(E)$;}\\
    \delta_T(D(c),v) & \mbox{if $c\in \cells(D_v)\setminus\cells(E)$;}\\
     \delta_T(v,E(c)) & \mbox{if $c\in \cells(E)\setminus\cells(D_v)$.}
\end{cases}
$$
In words, we consider the transformation of $E$ from $D_v$; if $c$ is a cell that moves from one vertex of $T_v$ to another, then $\Delta(c)$ is the distance between the two vertices. If $c$ is a resident cell that disappears, then $\Delta(c)$ is the cost of moving $c$ to the root. If $c$ is a visitor cell that occurs in $E$, then $\Delta(c)$ is the cost of moving $c$ from the root to its vertex.

Given $t_1,\dots,t_q$, we will compute, for each vertex $v$, the minimum-cost $(t_1,\dots,t_q)$-placement (among all sets of cells) in $T_v$. This value will be stored as 
$$\opt_v(t_1,\dots,t_q)$$ 
where, as a special case, the cost of the best repair of $D$, is $\opt_r(n_1,\dots,n_q)$,
where $r$ is the root vertex of $T$.
If no $(t_1,\dots,t_q)$-placement exists, then $\opt_v(t_1,\dots,t_q)$ is $\infty$.
(As usual in dynamic programming, 
the actual repair is obtained by restoring the optimal placements that produce the least cost $\opt_r(n_1,\dots,n_q)$.)

\subparagraph*{Handling leaves.} When $v$ is a leaf, we define $\opt_v(t_1,\dots,t_q)=0$ if 
$(t_1,\dots,t_q)\in\Gamma(v)$ (hence, the coincidence profile of $v$ is legal); otherwise, $\opt_v(t_1,\dots,t_q)=\infty$. 

\subparagraph*{Handling internal vertices.}
We now consider the case where $v$ is an internal vertex. For a vertex $u$ of $T$ and $j=1,\dots,q$, we use $n_j[T_u]$ to denote the number of $A_j$-cells in the tree $T_u$ (as positioned in $D$), that is, the sum of $|(D^{-1}(u'))[A_j]|$ over all vertices $u'$ in the subtree $T_u$.

Our goal is to find an optimal $(t_1,\dots,t_q)$-placement for the entry $\opt_v(t_1,\dots,t_q)$. We will find the minimal cost for every coincidence profile 
$(i_1,\dots,i_q)\in\Gamma(v)$ of the vertex $v$, and then we will take the least-cost entry across all coincidence profiles. In the remainder of this part, we fix $(i_1,\dots,i_q)$. 

Recall that $v$ is an internal vertex, and so, has precisely two children. Let us denote them by $v_1$ and $v_2$.  To obtain our optimal $(t_1,\dots,t_q)$-placement, we can \emph{pull} from $T_{v_\ell}$ (where $\ell\in\set{1,2}$) a set of $A_j$-cells of size $p$ for $p\in\set{0,\dots,n_j[T_{v_\ell}]}$, or \emph{push} to $T_{v_\ell}$ a set $A_j$-cells of size $p$ for $p\in\set{0,\dots,t_j-n_j[T_{v_\ell}]}$. (Note that there is no gain in pulling $A_j$-cells and pushing $A_j$-cells at the same time since the placement can use the pulled cell instead of the pushed cell with no additional cost, or even a lower cost.)
We say uniformly that we pull from $T_{v_\ell}$ a set of $A_j$-cells of size $p_{\ell,j}$ for $p_{\ell,j}\in\set{-(t_j-n_j[T_{v_\ell}]),\dots, n_j[T_{v_\ell}]}$, where the meaning of pulling $-p$ cells, for $p\geq 0$, is pushing $p$ cells. 
Once we pull $A_j$-cells from (or push $A_j$-cells to) $T_{v_\ell}$, we place the cells optimally in $T_{v_\ell}$. To find the cost of that, we use a previously computed 
$\opt_{v_\ell}(t_1^\ell,\dots,t_q^\ell)$ where $t_j^\ell=n_j[T_{v_\ell}]-p_{\ell,j}$ (i.e., the number of $A_j$-cells that remain in $T_{v_\ell}$).  Hence, the total cost is given by:
\begin{equation}\label{eq:cost_enrich}
\sum_{\ell=1,2} \,
\Big(\sum_{j=1}^q 
\big(w(A_j)\cdot p_{\ell,j}\cdot \delta_T(v,v_\ell)\big)+
\opt_{v_\ell}(t_1^\ell,\dots,t_q^\ell)\Big)
\end{equation}
We will then iterate over all \emph{legal} combinations of
$p_{\ell,j}$ and take the minimal cost according to \Cref{eq:cost_enrich}. The combination is legal if, for all $j=1,\dots,q$, the number of cells that we position in $T_v$ is indeed $t_j$. This number consists of the number of $A_j$-cells that remain in each $T_{v_\ell}$, namely $n_k[T_{v_\ell}]-p_{\ell,j}$, plus the number $i_j$ of $A_j$-cells that remain in $v$; hence:
    $$i_j+p_{1,j}+p_{2,j} = t_j$$

%Our iteration covers all movement of cells in the tree, except for the case where we remove a nonempty set of resident cells and add a nonempty set of visitor cells, all with the same label $A_j$. However, we can safely ignore such movements: there is no gain in pushing a resident cell out and inserting an external cell since we can move the resident cell to the visitor's position for at most the same cost. (In particular, we can always assume that $\min(n_j,t_j)$ of the vertices in the placement are residents.)

This concludes the description of the dynamic program. Clearly, the execution of the program terminates in polynomial time. (Recall that the schema $\signature=(A_1,\dots,A_q)$ is fixed and, in particular, $q$ is treated as a constant.)
%Next, we discuss the correctness of the program.
%\subparagraph*{Correctness.} 
The dynamic program is correct in the sense that it constructs an optimal repair if any repair exists. This is proved by showing that the program is indeed solving the generalized optimization problem:
\begin{lemma}
Let $v$ be a vertex of $T$.  When we process $v$ with $t_1,\dots,t_q$, we compute the least cost of a $(t_1,\dots,t_q)$-placement in $T_v$, or  $\infty$ if none exists.
\end{lemma}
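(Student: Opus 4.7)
My plan is to prove the lemma by induction on the height of $v$ in $T$, using the inductive hypothesis that $\opt_u(\cdot)$ has been computed correctly for every vertex $u$ strictly below $v$. The base case, in which $v$ is a leaf, is immediate: the subtree $T_v$ is the single vertex $v$, so every cell in any $(t_1,\dots,t_q)$-placement must be assigned the value $v$, and then each term $\Delta(c)$ in the cost vanishes because $\delta_T(v,v)=0$. Therefore such a placement exists iff $(t_1,\dots,t_q)\in\Gamma(v)$, and when it exists its cost is $0$, matching the algorithm's definition.

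For the inductive step I would establish a cost-preserving bijection between $(t_1,\dots,t_q)$-placements $E$ in $T_v$ and triples $((i_1,\dots,i_q),E_1,E_2)$ in which $(i_1,\dots,i_q)=p_E(v)\in\Gamma(v)$ is the coincidence profile at $v$ and $E_\ell$ is a $(t_1^\ell,\dots,t_q^\ell)$-placement in $T_{v_\ell}$, subject to $i_j+t_j^1+t_j^2=t_j$ for every $j$. The forward direction restricts $E$ to each subtree, reclassifying each cell as a resident or visitor of $T_{v_\ell}$ as appropriate; the backward direction glues $E_1$, $E_2$, and the $i_j$ many $A_j$-cells pinned at $v$. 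The algorithm's outer enumeration of $(i_1,\dots,i_q)\in\Gamma(v)$ together with the inner enumeration of the $p_{\ell,j}$ (equivalently of $t_j^\ell=n_j[T_{v_\ell}]-p_{\ell,j}$) explores exactly this parameter space, and legality of the combination enforces $i_j+t_j^1+t_j^2=t_j$.

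The main obstacle, and the heart of the proof, is to verify that the cost of $E$ decomposes additively into the two subtree costs $\opt_{v_\ell}(t_1^\ell,\dots,t_q^\ell)$ and the edge-traversal terms in Equation~(\ref{eq:cost_enrich}). For this I would charge each cell $c$ along the $D$-to-$E$ path in $T$. A cell whose $D(c)$ and $E(c)$ lie in the same $T_{v_\ell}$ contributes its within-subtree distance to $\opt_{v_\ell}$ alone, and the conventions in the definition of $\Delta$ (routing disappearing residents to the subtree root and entering visitors from the subtree root) are exactly what makes the decomposition telescope at the edge $\{v,v_\ell\}$: for instance a cell starting in $T_{v_1}$ and ending in $T_{v_2}$ is charged $\delta_T(D(c),v_1)$ inside $\opt_{v_1}$, $\delta_T(v_2,E(c))$ inside $\opt_{v_2}$, and $\delta_T(v_1,v)+\delta_T(v,v_2)$ across the two edges at $v$, which sum to $\delta_T(D(c),E(c))$ by the tree-path property. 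The number of $A_j$-cells crossing the edge $\{v,v_\ell\}$ then equals $|p_{\ell,j}|$; the paper's remark that simultaneous pulling and pushing along the same edge is wasteful (a pulled cell can directly fill the role of a pushed one at no greater cost, by the triangle inequality) lets us assume a consistent sign and conclude the match with the algorithm's formula. Combining the bijection with the inductive hypothesis, the minimum computed by the algorithm equals the true minimum $(t_1,\dots,t_q)$-placement cost in $T_v$, and equals $\infty$ exactly when no such placement exists.
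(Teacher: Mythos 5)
Your proposal is correct and follows exactly the route the paper has in mind: the paper records no details (it declares the lemma ``straightforward from the description of the algorithm''), and your structural induction---base case at the leaves, decomposition of a placement into the profile at $v$ plus placements in the two child subtrees, the telescoping of the $\Delta$-charges across the edges $\{v,v_\ell\}$, and the triangle-inequality exchange argument justifying that only the net quantity $|p_{\ell,j}|$ of cells need cross an edge---is precisely that intended straightforward argument, spelled out in more detail than the paper itself provides.
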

The proof is straightforward from the description of the algorithm in this section.

%\david{happy with this phrasing? say something about verifiability by induction? any chance a reviewer might get upset that there is no (overly) formal proof in the appendix?}

\subsection{General (Finite) Metrics}
\label{sec:tree-to-general}
Leveraging our algorithm for tree metrics and classic results for probabilistic tree embeddings, in this section we devise a logarithmic-ratio approximation for general metrics. 
% by a reduction to the case of a tree metric that we discussed in the previous section. 

%\subsection{Hardness}

%\subparagraph*{Occurence profiles.}
%Let $\Gamma$ be a set of UIDs. We denote by $\atts(\Gamma)$ the set of attributes that occur in the constraints of $\Gamma$. An \emph{occurrence profile} (for $\Gamma$) is a mapping $\Omega:\atts(\Gamma)\rightarrow\set{\true,\false}$ so that $\Omega(A)\rightarrow \Omega(B)$ holds for every UID $A\sqsubseteq B$ in $\Gamma$.  We will use occurrence profiles $\Omega$ to indicate whether a certain point should include at least one $A$-cell ($\Omega(A)=\true$) or none ($\Omega(A)=\false$).

Next, we prove that we can obtain a logarithmic approximation for a general (finite) metric. Formally, we will prove that:

\begin{theorem}\label{thm:general-log}
    There is a polynomial-time randomized algorithm that, given a metric $(M,\delta)$, a coincidence constraint $\Gamma$, an inconsistent database $D$, and an error probability $\epsilon>0$, finds an $O(\log|M|)$-optimal repair with probability at least $1-\epsilon$, if any repair exists.
\end{theorem}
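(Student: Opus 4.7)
The plan is to reduce the problem on a general finite metric $(M,\delta)$ to the tree-metric case already handled by \Cref{theorem:T}, using the classical probabilistic tree-embedding result of Fakcharoenphol, Rao, and Talwar~\cite{DBLP:journals/jcss/FakcharoenpholRT04} (building on Bartal~\cite{DBLP:conf/focs/Bartal96}). Recall that FRT yields, in polynomial time, a samplable distribution over tree metrics $(M',\delta_T)$ with $M\subseteq M'$ such that (i) $\delta_T(x,y)\geq \delta(x,y)$ for every $x,y\in M$ (non-contraction), and (ii) $\mathbb{E}_T[\delta_T(x,y)]\leq O(\log n)\cdot \delta(x,y)$ for every $x,y\in M$, where $n=|M|$.

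My algorithm will be: sample a tree $T$ from the FRT distribution, extend $\Gamma$ to $M'$ by declaring $\Gamma(v)=\{(0,\dots,0)\}$ for each Steiner vertex $v\in M'\setminus M$ (so that no cell is ever placed outside the original domain), apply \Cref{theorem:T} to the input $(M',\delta_T,\Gamma,D)$ to obtain an exact optimal repair $E^T$ with respect to $\delta_T$, and output $E^T$. Because every cell lands in $M$, the output is a legal repair of $D$ under the original $\Gamma$.

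For the analysis, I would let $E^\ast$ denote a true optimum under $\delta$ and chain the inequalities
\begin{align*}
\kappa(D,E^T,\delta)\;\leq\;\kappa(D,E^T,\delta_T)\;\leq\;\kappa(D,E^\ast,\delta_T),
\end{align*}
where the first step uses non-contraction and the second uses $\delta_T$-optimality of $E^T$. Taking expectations over $T$ and applying property~(ii) termwise over the cells of $D$ yields $\mathbb{E}_T[\kappa(D,E^T,\delta)]\leq O(\log n)\cdot\kappa(D,E^\ast,\delta)$. A single Markov step then gives an $O(\log n)$-approximation with constant probability, and repeating the sampling $\Theta(\log(1/\epsilon))$ times independently and returning the cheapest repair produced boosts the success probability to at least $1-\epsilon$. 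If no repair exists, every invocation of \Cref{theorem:T} detects this, and the overall algorithm reports infeasibility.

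The step requiring the most care is the interface between FRT and our framework, not any novel mathematics: I need to verify that the extended tree $(M',\delta_T,\Gamma)$ is a valid input to \Cref{theorem:T} (the construction in \Cref{sec:tree} already assumes a pseudometric-style augmentation with forced-empty vertices, so this is immediate), and that the FRT bounds are applied only to pairs $(D(c),E^\ast(c))\in M\times M$ where they are guaranteed to hold. Beyond this bookkeeping, the argument is a routine composition of a known embedding with our exact tree-metric solver; the only reason the approximation is randomized is the sampling of $T$.
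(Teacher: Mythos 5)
Your proposal is correct and follows essentially the same route as the paper: sample an FRT tree, solve exactly on it via \Cref{theorem:T}, chain the non-contraction and tree-optimality inequalities, take expectations termwise using \Cref{lemma:tree-embedding}, then apply Markov and $O(\log\frac1\epsilon)$ independent repetitions. Your extra bookkeeping for Steiner vertices (forcing $\Gamma(v)=\{(0,\dots,0)\}$ on them) is a fine way to handle the embedding's auxiliary nodes and mirrors the forced-empty-vertex trick the paper already uses in its tree algorithm.
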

In the remainder of this section, we prove \Cref{thm:general-log}. 
We fix the input $(M,\delta,\Gamma,D)$ 
for the rest of this section.
The proof is based on
 the following classic \emph{tree embedding} lemma~\cite{DBLP:journals/jcss/FakcharoenpholRT04} that allows for the translation of exact algorithms for tree metrics to $O(\log |M|)$-approximation algorithms for arbitrary metrics.

\begin{lemma}[\!\!\cite{DBLP:journals/jcss/FakcharoenpholRT04}]\label{lemma:tree-embedding}
    Given a metric $(M,\delta)$, there exists a polytime samplable distribution $\mathcal{P}$ over weighted trees $T$ defining tree metrics $(M,\delta_T)$ satisfying for every $u,v\in M$:
    \begin{enumerate}
        \item $\delta(u,v)\leq \delta_T(u,v)$ for every $T$ in the support of $\mathcal{P}$.
        \item $\mathbb{E}_{T\sim \mathcal{P}}[\delta_T(u,v)] = O(\log |M|)\cdot \delta(u,v)$.
    \end{enumerate}
\end{lemma}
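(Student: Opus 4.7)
The plan is to follow the classical construction of Fakcharoenphol-Rao-Talwar: build a random hierarchically laminar partition of $M$ whose induced weighted tree gives the required distribution $\mathcal{P}$, then verify the two stated conditions. Rescale $\delta$ so the smallest positive distance is at least $1$, and set $\Delta := \max_{u,v}\delta(u,v)$ and $L := \lceil \log_2 \Delta\rceil + 1$. Draw a uniformly random permutation $\pi$ of $M$ and an independent scale $\beta \in [1,2]$ with density proportional to $1/\beta$. For each level $i = L, L-1, \ldots, 0$, form a partition $\mathcal{P}_i$ by assigning each $v \in M$ to the cluster centered at the first point $u$ in $\pi$ with $\delta(u,v) \leq \beta \cdot 2^{i-1}$. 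Since the radii shrink monotonically, the partitions are laminar. The tree $T$ has one node per cluster, parent-child relations given by refinement, and each edge from a level-$i$ cluster to a level-$(i-1)$ child carries weight $2^i$. Sampling $\pi$ and $\beta$ and building $T$ takes polynomial time, so $\mathcal{P}$ is polytime samplable.

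For the no-contraction property, fix any pair $u \neq v$, and let $i^*$ be the coarsest level at which they lie in different clusters. At level $i^*+1$ they share a common center $u^*$, giving $\delta(u,v) \leq 2\beta \cdot 2^{i^*}$ by the triangle inequality. On the other hand, the tree distance is at least $2 \cdot 2^{i^* + 1}$ (from leaf to LCA and back), so up to a harmless constant absorbing $\beta \leq 2$ into the edge weights, we obtain $\delta_T(u,v) \geq \delta(u,v)$ pointwise on the support of $\mathcal{P}$.

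The main step, and the main obstacle, is the $O(\log|M|)$ expected-stretch bound. Fix $u, v$ with $\delta(u,v) = d$ and order all remaining points $w_1, w_2, \ldots$ in increasing order of their distance to $\{u,v\}$. Say $w_k$ \emph{settles} the pair at level $i$ if $w_k$ is the first point in $\pi$ whose level-$i$ ball of radius $\beta\cdot 2^{i-1}$ contains exactly one of $u, v$. A standard observation is that $u,v$ can be split at level $i$ only if some $w_k$ settles them there, so $\delta_T(u,v) = O\bigl(\sum_i 2^i \cdot \Pr[\text{split at level $i$}]\bigr)$. Conditioned on $w_k$ settling the pair, $w_k$ must precede all of $w_1,\ldots,w_{k-1}$ in $\pi$, contributing a factor $1/k$ from the uniform permutation. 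Simultaneously, the event that the radius $\beta \cdot 2^{i-1}$ falls in the separating window of width $d$ around $w_k$, when weighted by $2^i$ and integrated against the $1/\beta$ density, telescopes across levels to $O(d)$ per candidate $w_k$. Summing over $k$ yields $\mathbb{E}[\delta_T(u,v)] = O(d) \cdot H_{|M|-1} = O(\log |M|)\cdot \delta(u,v)$.

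The chief subtlety is the joint accounting of the two independent sources of randomness: the $1/\beta$ density is chosen precisely so that integrating the edge weight $2^i$ against the probability that $\beta \cdot 2^{i-1}$ separates $w_k$ from exactly one endpoint yields a contribution independent of the level $i$, collapsing the per-level sum into a single harmonic sum over the $n-1$ candidate settlers. With this telescoping lemma in hand, both conditions of the lemma follow by linearity of expectation.
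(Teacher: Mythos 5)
This lemma is imported verbatim from Fakcharoenphol--Rao--Talwar; the paper gives no proof and uses it as a black box, so there is no in-paper argument to compare against. Your sketch is essentially the original FRT proof, and its core is correct: charging the separation of $u,v$ at level $i$ to the first point of $\pi$, among points ordered by distance to $\{u,v\}$, whose ball can cut the pair; extracting the $1/k$ factor from the uniform permutation; and choosing the $1/\beta$ density precisely so that the level-weighted cutting probability integrates to $O(\delta(u,v))$ per candidate settler, which collapses the level sum and leaves the harmonic sum $O(\log|M|)\cdot\delta(u,v)$. The no-contraction bound via the common level-$(i^*+1)$ center and the geometric edge weights is also fine.

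One step is misstated: the claim that ``since the radii shrink monotonically, the partitions are laminar'' is false for the construction as you literally describe it. If every point is independently assigned, at every level, to the first center of $\pi$ within distance $\beta\cdot 2^{i-1}$, then two points can end up in the same level-$i$ cluster while lying in different level-$(i+1)$ clusters, because the level-$i$ center that captures a point need not lie in that point's level-$(i+1)$ cluster; shrinking radii do not prevent this. Laminarity has to be enforced by construction: the decomposition is built top-down, each level-$(i+1)$ cluster is partitioned separately, and the level-$i$ cluster of $v$ is the set of points of $v$'s level-$(i+1)$ cluster that are assigned to the same center as $v$. This is the standard fix and it leaves the rest of your analysis untouched, since the center-assignment rule (first point of $\pi$ within the radius) is unchanged and the settling argument depends only on that rule.
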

Blelloch, Gu, and Sun~\cite{blelloch2017efficient} show a near-linear-time counterpart of \Cref{lemma:tree-embedding}. We note that randomization and expectation are crucial for this theorem; for example, if we embed an $n$-point cycle with unit-distance edges in any tree, then at least one pair of vertices will suffer an $\Omega(n)$ distortion~\cite[Theorem 7]{DBLP:conf/focs/Bartal96}. 
 %\benny{David - is this true? If so, can you provide a reference?}\david{}
%
%Note that the $O(\log n)$ upper bound holds in expectation for every individual edge, but is not necessarily realizable by a single tree $T$. 
%

By combining \Cref{theorem:T} and \Cref{lemma:tree-embedding}, we will
show how we obtain an $O(\log |M|)$ approximation, with high probability, in polynomial time. 
We do so by repeatedly finding an optimal repair for multiple random trees $T$, and taking the best outcome. More precisely, to obtain an 
$O(\log |M|)$-approximation with probability at least $1-\epsilon$, we take the best outcome out of the $O(\log\frac1\epsilon)$ repetitions of the following procedure:
\begin{enumerate}
    \item Select a random tree $T$ according to \Cref{lemma:tree-embedding}.
    \item Find an optimal repair $E_T$ for $(M,\delta_T)$, $\Gamma$ and $D$.
\end{enumerate}

\def\Eopt{E_{\mathrm{opt}}}

We show that the process gives an $O(\log|M|)$-approximation with probability at least $1-\epsilon$. Let us denote by $\Eopt$ an optimal repair for the original metric $(M,\delta)$. 
We will use the following two lemmas.

\begin{restatable}{lemma}{expectedlog}\label{lemma:expected-log}
$\mathbb{E}_{T\sim \mathcal{P}}[\kappa(D,E_T,\delta)] \leq O(\log |M|)\cdot \kappa(D,\Eopt,\delta)$.
\end{restatable}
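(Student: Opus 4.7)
\textbf{Proof plan for \Cref{lemma:expected-log}.} The plan is to chain three inequalities: (i) pointwise domination of $\delta$ by $\delta_T$, (ii) optimality of $E_T$ under the tree metric $\delta_T$, and (iii) the expected-distortion bound from \Cref{lemma:tree-embedding}, combined with linearity of expectation. Critically, these inequalities go in compatible directions, and nothing about the cost function $\kappa$ beyond its being a nonnegatively-weighted sum of pairwise distances $\delta(D(c), E(c))$ is needed.

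First, fix any tree $T$ in the support of $\mathcal{P}$. Because \Cref{lemma:tree-embedding}(1) gives $\delta(u,v) \leq \delta_T(u,v)$ for all $u,v \in M$, and because $w(\lambda(c))\geq 0$ for every cell, we have termwise
\[
\kappa(D, E_T, \delta) \;=\; \sum_{c\in\cells(D)} w(\lambda(c))\cdot \delta(D(c),E_T(c)) \;\leq\; \sum_{c\in\cells(D)} w(\lambda(c))\cdot \delta_T(D(c),E_T(c)) \;=\; \kappa(D, E_T, \delta_T).
\]
Next, since $E_T$ is (by construction, via \Cref{theorem:T}) an optimal repair with respect to the tree metric $(M,\delta_T)$ and since $\Eopt$ is \emph{some} legal repair (here we use that the constraint $\Gamma$ does not depend on the metric, so $\Eopt \models \Gamma$ is still a valid repair under $\delta_T$), we have $\kappa(D, E_T, \delta_T) \leq \kappa(D, \Eopt, \delta_T)$.

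Now take expectations over $T \sim \mathcal{P}$. The two inequalities above hold for every $T$ in the support, so
\[
\mathbb{E}_{T\sim \mathcal{P}}[\kappa(D,E_T,\delta)] \;\leq\; \mathbb{E}_{T\sim \mathcal{P}}[\kappa(D,\Eopt,\delta_T)] \;=\; \sum_{c\in\cells(D)} w(\lambda(c))\cdot \mathbb{E}_{T\sim \mathcal{P}}\bigl[\delta_T(D(c),\Eopt(c))\bigr],
\]
where the equality is linearity of expectation (note that $\Eopt$ does not depend on $T$, so the randomness is confined to $\delta_T$). Finally, \Cref{lemma:tree-embedding}(2) bounds each inner expectation by $O(\log|M|)\cdot \delta(D(c),\Eopt(c))$, which yields
\[
\mathbb{E}_{T\sim \mathcal{P}}[\kappa(D,E_T,\delta)] \;\leq\; O(\log|M|) \sum_{c\in\cells(D)} w(\lambda(c))\cdot \delta(D(c),\Eopt(c)) \;=\; O(\log|M|)\cdot \kappa(D,\Eopt,\delta).
\]

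There is essentially no obstacle here; the only point that needs a line of care is (ii), namely that $\Eopt$, defined as optimal for the original metric, is still a feasible competitor against $E_T$ under $\delta_T$. This is immediate because feasibility of a repair depends only on the coincidence constraint $\Gamma$ and on the cell-to-value assignment, not on the metric being used to measure cost.
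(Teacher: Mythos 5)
Your proposal is correct and follows essentially the same route as the paper's proof: pointwise domination $\kappa(D,E_T,\delta)\leq\kappa(D,E_T,\delta_T)$, then optimality of $E_T$ against the feasible competitor $\Eopt$ under $\delta_T$, then linearity of expectation together with the expected-distortion bound of \Cref{lemma:tree-embedding}. Your explicit remark that feasibility of $\Eopt$ is metric-independent is the same observation the paper makes when it notes that $\Eopt$ is a repair.
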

\begin{proof}[Proof (Sketch)]
We prove the lemma (in \Cref{app:proof-of-expectation}) by analyzing $\mathbb{E}_{T\sim \mathcal{P}}[\kappa(D,E_T,\delta)]$, applying the linearity of expectation and \Cref{lemma:tree-embedding}. 
\end{proof}

\begin{lemma}\label{lemma:probable-log}
$\mathrm{Pr}_{T\sim \mathcal{P}}
\Big[\kappa(D,E_T,\delta) \leq C\cdot\log |M|\cdot \kappa(D,\Eopt,\delta)\Big]\geq \frac12$ for some constant $C>0$. 
%\david{A tad weird to have big-Oh notation in the upper bound of the probability. Perhaps say let $C$ be the constant hidden in the big-Oh notation in the preceding lemma and then say that probability exceed 2Clog|M| is at most 1/2?} \benny{David - would it be accurate enough to avoid the O notation altogether and say "There exists a constant C such that Pr[...]$\geq\frac12$?"}
In words, $E_T$ is $O(\log |M|)$-optimal with probability at least $1/2$.
\end{lemma}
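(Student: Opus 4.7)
The plan is a direct application of Markov's inequality to the expectation bound established in Lemma \ref{lemma:expected-log}. The random variable in question is $X \defeq \kappa(D, E_T, \delta)$, where the randomness is over the tree $T \sim \mathcal{P}$. This random variable is nonnegative, since the cost $\kappa$ is a sum of weighted distances, and by Lemma \ref{lemma:expected-log} its expectation is at most $C' \cdot \log|M| \cdot \kappa(D, \Eopt, \delta)$ for some absolute constant $C'>0$.

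Applying Markov's inequality with threshold $a = 2C' \cdot \log|M| \cdot \kappa(D, \Eopt, \delta)$, I would obtain
\[
\mathrm{Pr}_{T\sim \mathcal{P}}\Bigl[\kappa(D, E_T, \delta) > 2C' \cdot \log|M| \cdot \kappa(D, \Eopt, \delta)\Bigr] \;\leq\; \frac{\mathbb{E}_{T\sim \mathcal{P}}[\kappa(D, E_T, \delta)]}{2C' \cdot \log|M| \cdot \kappa(D, \Eopt, \delta)} \;\leq\; \frac{1}{2}.
\]
Taking the complementary event with $C \defeq 2C'$ yields the claimed inequality. One small edge case to mention: if $\kappa(D, \Eopt, \delta) = 0$, then $D$ itself is already consistent and $E_T = D$ is returned by the tree-metric algorithm with cost $0$ (property~1 of Lemma \ref{lemma:tree-embedding} ensures $\delta_T$ is at least $\delta$, and so the algorithm of \Cref{theorem:T} has no incentive to move any cell); in this degenerate case the inequality holds trivially.

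The only subtlety worth calling out is that Markov's inequality is wasteful in general, but it is sufficient here because Lemma \ref{lemma:expected-log} already gives us the $O(\log|M|)$ factor in expectation. I do not anticipate any obstacle beyond making sure that the constant $C$ is chosen correctly relative to the hidden constant in the $O(\log|M|)$ notation of Lemma \ref{lemma:expected-log}, which is a routine matter of unfolding the asymptotic notation. This constant-probability guarantee is what is then boosted to success probability $1-\epsilon$ via the $O(\log \tfrac{1}{\epsilon})$ independent repetitions described in the outer scheme for \Cref{thm:general-log}.
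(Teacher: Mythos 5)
Your proof is correct and matches the paper's own argument exactly: both apply Markov's inequality to $\kappa(D,E_T,\delta)$ with threshold twice the expectation bound from \Cref{lemma:expected-log}. Your note on the degenerate case $\kappa(D,\Eopt,\delta)=0$ is a harmless extra precaution the paper leaves implicit.
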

\begin{proof}
This follows immediately from the inequality of  \Cref{lemma:expected-log} and Markov's inequality, namely
$\mathrm{Pr}[X\geq a]\leq \frac{\mathbb{E}[X]}{a}$, where $\kappa(D,E_T,\delta)$ plays the role of $X$ and  $2\cdot O(\log |M|)\cdot \kappa(D,\Eopt,\delta)$ plays the role of $a$.
\end{proof}

We can now complete the proof of \Cref{thm:general-log}, since the randomized procedure can be seen as a Bernoulli trial that succeeds (i.e., produces a good approximation) with a probability of at least $1/2$; hence, we see at least one success with probability $1-\epsilon$ after
$\log_2\frac1\epsilon = O(\log \frac1\epsilon)$ independent trials.

\section{Extensions}
In this section, we study two extensions of our study: the case of an infinite metric, and bound restriction on the movement of each individual cell.

\label{sec:extensions}
\subsection{Infinite Metrics}\label{sec:infinite}
Up to now, we have considered databases over a finite metric $(M,\delta)$ that is given explicitly as part of the input. In particular, a repair could use only values from the given point set $M$. There are, however, natural situations where the metric is a known \e{infinite} metric that can provide additional points for repairs. We wish to be able to repair an inconsistent database by using arbitrary values from the infinite metric.
An example is the $\ell_p$-metric $(M,\delta)$ where $M$ is the Euclidian space $\mathbb{R}^k$ and $\delta$ is defined by the norm $\Vert \cdot \Vert_p$ over $M$, that is, $\delta(v,u)=\Vert v-u \Vert_p$. 

To model the computational problem in the case of infinite metrics, we consider the case where the metric $(M,\delta)$ is fixed and infinite. Moreover, the coincidence constraint $\Gamma$ is fixed, and we restrict the discussion to a uniform  $\Gamma$ (that maps every point in $M$ to the same, possibly infinite, set of coincidence profiles). We will further assume that $\Gamma$ contains the profile $(0,\dots,0)$ since, otherwise, it is impossible to satisfy $\Gamma$ using any database, as our databases are finite. Computationally, we only require polynomial-time computation of distances $\delta(u,v)$, given $u$ and $v$, and membership testing in $\Gamma$, given $(i_1,\dots,i_q)$. 

The following theorem states that the ability to use points outside of $D$ is not useful if we are satisfied with a 2-approximation and the coincidence constraint is closed under addition. Note that a uniform coincidence constraint $\Gamma$ over $M$ is \e{closed under addition} if for every pair
$(i_1,\dots,i_q)$ and $(i'_1,\dots,i'_q)$ of  profiles in $\Gamma$, the profile $(i_1+i'_1,,\dots,i_q+i'_1,)$ is also in $\Gamma$. For illustration, referring to \Cref{example:coincidence-constraints}, the inclusion constraint $\Gamma_{A_j\sqsubseteq A_\ell}$ is closed under addition, but the key constraint $\Gamma_{\key(A_j)}$ and the foreign-key constraint $\Gamma_{A_j\fk A_\ell}$ are \e{not} closed under addition.

\begin{restatable}{prop}{noninventive}\label{prop:non-inventive}
Let $(M,\delta)$ be an infinite metric space, $\Gamma$ a uniform coincidence constraint, and $D$ an inconsistent database. If $\Gamma$ is closed under addition, then there exists a 2-optimal repair $E$ such that $\vals(E)\subseteq\vals(D)$.
\end{restatable}
\begin{proof}[Proof (Sketch)] To establish $E$, we take every point in $\vals(E)\subseteq\vals(D)$ and move all cells in that point to the nearest cell. The distance to each moved cell can then grow at most twice. See \Cref{app:infinite}.
\end{proof}

Note that it is necessary to make an assumption on the coincidence constraint in \Cref{prop:non-inventive}. If we remove the assumption, the statement is false simply because there may be no repair at all over the domain of $D$. For example, if $\Gamma$ is the key constraint $\key(A_j)$, then it may be necessary to introduce new metric points if $D$ has fewer points than $A_j$-cells.

\Cref{prop:non-inventive} implies that we can reduce the case of an infinite metric to the case of a finite one (assuming that the coincidence constraint is closed under addition). In particular, we can apply \Cref{thm:general-log} to conclude a logarithmic approximation in the infinite case.

\begin{theorem}\label{thm:general-log-infinite}
Let $(M,\delta)$ be an infinite metric space, and let $\Gamma$ be a uniform coincidence constraint that is closed under addition. There is a polynomial-time randomized algorithm that, given an inconsistent database $D$ and an error probability $\eta>0$, finds an $O(\log|M|)$-optimal repair with probability at least $1-\eta$. 
\end{theorem}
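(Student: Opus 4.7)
The plan is to reduce the infinite-metric instance to a finite-metric one on a polynomially-sized point set, and then invoke \Cref{thm:general-log} as a black box. The reduction is supplied almost for free by \Cref{prop:non-inventive}, so the proof should be short.

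First, I would apply \Cref{prop:non-inventive}: since $\Gamma$ is uniform and closed under addition, there exists a repair $E^{\dagger}$ with $\vals(E^{\dagger}) \subseteq \vals(D)$ and $\kappa(D,E^{\dagger},\delta) \leq 2 \cdot \kappa(D,\Eopt,\delta)$, where $\Eopt$ is an optimal repair over the full infinite metric. This confines the search to a finite point set of size at most $|\cells(D)|$, losing only a factor of $2$ in the approximation ratio.

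Next, I would build the finite instance $(M', \delta', \Gamma', D)$, where $M' = \vals(D)$, $\delta' = \delta|_{M' \times M'}$, and $\Gamma'$ is the restriction of $\Gamma$ to $M'$. Because the infinite-metric model provides polynomial-time oracles for distance computation and for membership in $\Gamma$, we can explicitly construct $(M', \delta', \Gamma')$ in polynomial time. Then I would invoke the algorithm of \Cref{thm:general-log} on this finite instance with error probability $\eta$, obtaining (with probability at least $1-\eta$) a repair $E$ whose cost on $(M',\delta')$ satisfies
\[
\kappa(D, E, \delta') \;\leq\; O(\log |M'|) \cdot \kappa(D, E^*, \delta'),
\]
where $E^*$ is an optimal repair of $D$ restricted to values in $M'$. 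Since $\delta$ and $\delta'$ agree on $M'\times M'$ and $\vals(E)\subseteq M'$, the cost $\kappa(D,E,\delta')$ equals $\kappa(D,E,\delta)$; and by optimality of $E^*$ among repairs using values in $M'$, we have $\kappa(D, E^*, \delta') \leq \kappa(D, E^{\dagger}, \delta) \leq 2\cdot \kappa(D,\Eopt,\delta)$. Chaining these inequalities yields an overall approximation ratio of $O(\log |M'|) = O(\log |\vals(D)|)$, which is in particular $O(\log n)$ for $n=|\cells(D)|$ (matching the claimed $O(\log|M|)$-optimality bound, understood relative to the relevant finite subdomain).

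I do not foresee a real technical obstacle: the only mildly delicate point is ensuring that the oracles of the infinite model transfer cleanly to a standard finite-metric input for \Cref{thm:general-log}, and that the ``$|M|$'' in the statement is interpreted as the size of the finite restricted domain $\vals(D)$, not the cardinality of the underlying infinite metric. Both are straightforward once \Cref{prop:non-inventive} has pinned the optimum (up to a factor $2$) inside $\vals(D)$.
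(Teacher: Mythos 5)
Your proposal is correct and follows essentially the same route as the paper, which justifies \Cref{thm:general-log-infinite} exactly by using \Cref{prop:non-inventive} to confine the search to $\vals(D)$ (losing a factor $2$) and then invoking \Cref{thm:general-log} on the induced finite metric; your observation that the bound should be read as $O(\log|\vals(D)|)$ rather than the cardinality of the infinite $M$ matches the intended reading.
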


The assumption on $\Gamma$ is crucial for \Cref{thm:general-log-infinite}. For example, suppose that $(M,\delta)$ is the $\ell_p$-metric and $\Gamma$ is a key constraint. In that case, we can construct a repair with an arbitrarily small cost, by generating enough points around those of $D$. Hence, any $\alpha$-optimal solution must be an optimal solution, since our approximation ratio is multiplicative. In particular, we cannot get any approximation guarantee by using the tree embedding of \Cref{lemma:tree-embedding}.

\subsection{Bound Restriction}
\label{sec:bound}
\def\trueT{\mathsf{T}}
\def\falseF{\mathsf{F}}

Note that a low cost of a repair $E$ does not necessarily imply that an individual cell is moved to a point that is close to its origin. This is due to our choice to measure the cost of a repair as the sum of cell movements. It is clearly of interest to consider the variant of the problem where we limit the movement of individual cells by a threshold.

In this section, we consider the extension of our repairing problem where a bound is posed on the maximal movement of a cell. Formally, consider a metric space $(M,\delta)$, a coincidence constraint $\Gamma$, and an inconsistent database $D$. For a threshold $\tau>0$, a \e{$(\delta\leq\tau)$-repair} is a repair $E$ such that $\delta(D(c),E(c))\leq w(\lambda(c))\cdot\tau$ for every $c\in\cells(D)$. Our goal is now to find a $(\delta\leq\tau)$-repair $E$ with a minimal $\kappa(D,E,\delta)$.

However, the bound restriction is nontrivial to deal with. Our algorithms inherently fail to deal with this restriction. The dynamic-programming algorithm of \Cref{theorem:T} is based on the fact that we can remember the total movement of cells from/to the root, but not any information about individual cells. Moreover, the approximation using the random tree embedding of \Cref{lemma:tree-embedding} cannot lead to the support of a bound restriction, since every random tree may (unavoidably) have a high distortion, meaning that two individual points $u$ and $v$ can be way farther in $\delta_T$ than in $\delta$. 

In fact, even the existence of a repair (regardless of its cost) becomes an intractable problem in the presence of a bound constraint.

\begin{restatable}{prop}{boundedhard}\label{prop:bounded-hard}
It is NP-complete to determine whether any $(\delta\leq\tau)$-repair exists, given a (finite) metric $(M,\delta)$, a coincidence constraint $\Gamma$, an inconsistent database $D$, and threshold~$\tau$. The problem remains NP-hard in each of the following cases:
\begin{enumerate}
    \item The signature $\signature$ consists of a single attribute and $\Gamma$ is uniform.
     \item The signature is $\signature=(A_1,A_2)$ and $\Gamma$ is the inclusion constraint $\Gamma_{A_1\sqsubseteq A_2}$.
\end{enumerate}
\end{restatable}

\begin{proof}[Proof (Sketch)]
For the first case we devise a reduction from exact cover by 3-sets, and for the second we use CNF satisfiability. See \Cref{app:bounded}.
\end{proof}
\subsubsection{Algorithm for the Line Domain}
In contrast to the hardness shown in \Cref{prop:bounded-hard}, we can efficiently force a bound restriction in the case of a line metric.
\begin{theorem}\label{thm:line-bounded-ptime}
An optimal $(\delta\leq \tau)$-repair can be found in polynomial time, given $\Gamma$, $D$, $\tau$ and a line metric $(M,\delta)$. The same holds true if $(M,\delta)=(\mathbb{R},\delta_{\mathbb{R}})$ is fixed and $\Gamma$ is fixed, uniform, and closed under addition.
\end{theorem}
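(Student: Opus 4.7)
The plan is a dynamic-programming (DP) algorithm for the finite line-metric case, plus a structural reduction from the infinite-metric case to a polynomial-size finite instance on which the same DP applies.

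For the finite case, sort the points of $M$ as $v_1 < v_2 < \dots < v_n$, and for each attribute $A_j$ sort its cells as $c_1^j, c_2^j, \dots, c_{n_j}^j$ by the values $D(c_\ell^j)$ in $\mathbb{R}$. The structural backbone is a \emph{sorted-matching-with-bounds lemma}: once we fix the multiset of target positions (i.e., the number $k_{j,i}$ of $A_j$-cells placed at each $v_i$), the cheapest bound-feasible assignment is obtained by matching cells of $A_j$ to target positions in sorted order. Cost-optimality is the classical $\ell_1$-transportation fact on a line; for bound feasibility I will use an exchange argument showing that if $D(c_i^j) \leq D(c_{i'}^j)$ are sent to $t_b > t_a$ and each assignment obeys $|\,\cdot\,| \leq w(A_j)\tau$, then swapping the two targets still obeys the bound. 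This reduces to a short case analysis on the relative order of the four numbers $D(c_i^j), D(c_{i'}^j), t_a, t_b$.

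Given the lemma, process $v_1,\dots,v_n$ left-to-right with DP state $(s_1,\dots,s_q)$, where $s_j$ is the number of $A_j$-cells already placed in $v_1,\dots,v_{i-1}$; by sorted matching these are exactly $c_1^j,\dots,c_{s_j}^j$. For each $v_i$, iterate over coincidence profiles $(k_1,\dots,k_q)\in\Gamma(v_i)$ with $s_j+k_j\leq n_j$; the transition is feasible iff $|D(c_{s_j+\ell}^j) - v_i| \leq w(A_j)\tau$ for all $j$ and $1 \leq \ell \leq k_j$, and its contribution is
\[
\sum_{j=1}^{q} w(A_j)\sum_{\ell=1}^{k_j} |D(c_{s_j+\ell}^j) - v_i|.
\]
A repair exists iff $\opt_{v_n}(n_1,\dots,n_q)$ is finite, in which case it equals the optimal cost. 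Since $q$ is fixed, the state space and the number of transitions per state are each $O(|\cells(D)|^q)$, yielding polynomial time.

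For the infinite case on $(\mathbb{R},\delta_{\mathbb{R}})$ with $\Gamma$ fixed, uniform, and closed under addition, I will show a structural lemma that reduces to the finite case. Let $P := \{D(c),\, D(c)\pm w(\lambda(c))\tau : c\in\cells(D)\}\subseteq\mathbb{R}$; then $|P|=O(|\cells(D)|)$. I claim there is an optimal $(\delta\leq\tau)$-repair $E$ with $\vals(E)\subseteq P$. Given any optimal $E$ and any $v\in\vals(E)$, set $S_v=E^{-1}(v)$ and consider
\[
f(v')=\sum_{c\in S_v} w(\lambda(c))\,|D(c)-v'|,\qquad I_v=\bigcap_{c\in S_v}[D(c)-w(\lambda(c))\tau,\,D(c)+w(\lambda(c))\tau].
\]
The function $f$ is convex and piecewise linear with kinks in $\{D(c):c\in S_v\}\subseteq P$, and both endpoints of $I_v$ lie in $P$, so the minimum of $f$ over $I_v$ is attained at some point of $P$. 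Sliding $v$ to that point cannot raise cost; if it coincides with another position already in $E$ then the two merge, and closure under addition guarantees the combined coincidence profile is still in $\Gamma$ (uniformity is used here since we do not relocate between different $\Gamma(\cdot)$). Iterating over all positions yields an optimal repair with $\vals(E)\subseteq P$, after which the finite-case DP on $(P,\delta_{\mathbb{R}})$ returns an optimum.

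The main obstacle will be the sorted-matching-with-bounds lemma: the usual swap argument for $\ell_1$ transportation must be strengthened so that swapping not only does not increase cost but also preserves every individual cell's bound constraint, which requires the careful four-point case analysis sketched above. The structural lemma for the infinite case is then a short convex-piecewise-linear argument, with the one subtlety being the need to invoke both uniformity and closure under addition in order to justify merging coincident positions without leaving $\Gamma$.
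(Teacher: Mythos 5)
Your proposal is correct, and for the finite case it is in substance the paper's own algorithm: your DP state $(i;s_1,\dots,s_q)$ is exactly the paper's pair of a position $v_r$ and a prefix $D'$ (a prefix is determined by how many cells of each attribute it contains), and your sorted-matching-with-bounds lemma plays precisely the role of \Cref{lemma:line-suffix-contracted}, proved by the same no-crossing exchange argument -- your explicit four-point check that a swap also preserves each cell's bound is a detail the paper's appendix treats only implicitly. Where you genuinely diverge is the infinite case: the paper proves the same candidate-set lemma (\Cref{lemma:line-infinite-to-finite}, with the same set $P=\vals(D)\cup\{v\pm w(A)\tau\}$) by an explicit case analysis, comparing the total weight of the group's cells on either side of the shared target and shifting the whole group to an extreme original value $D(c_\ell)$ or to the nearest binding endpoint $D(c_i)\pm w(\lambda(c_i))\cdot\tau$, whereas you note that the group cost $f$ is convex piecewise linear with kinks in $\vals(D)$ and the feasible set $I_v$ is an interval with endpoints in $P$, so some minimizer lies in $P\cap I_v$. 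Your convexity argument is shorter, covers the paper's cases uniformly, and applies verbatim to an arbitrary $(\delta\le\tau)$-repair rather than an optimal one, which cleanly sidesteps the question of whether an optimum over the whole line is attained before the reduction. Both arguments invoke uniformity and closure under addition at the same point (merging coincident groups without leaving $\Gamma$), and both implicitly rely on $(0,\dots,0)\in\Gamma$, which the paper assumes in \Cref{sec:infinite} and which is forced anyway whenever any repair over an infinite metric exists.
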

In the remainder of this section, we prove \Cref{thm:line-bounded-ptime} by presenting an algorithm for computing an optimal $(\delta\leq\tau)$-repair. Throughout the section, we assume that $M\subseteq\mathbb{R}$ is a set of numbers and $\delta$ is $\delta_{\mathbb{R}}$. We consider separately the cases where $M$ is finite and given as part of the input, and where $M$ is $\mathbb{R}$ itself.

\subparagraph*{Given (finite) metric.} 
We first introduce some notation. Let $D$ be a database. 

A \e{subset} of $D$ is a database is $D'$ such that $\cells(D')\subseteq\cells(D)$ and $D'(c)=D(c)$ for all $c\in\cells(D')$. We write $D'\subseteq D$ to state that $D'$ is a subset of $D$, and we denote by $D\setminus D'$ the complement of $D'$, that is, the subset $D''$ of $D$ with $\cells(D'')=\cells(D)\setminus\cells(D')$. If $D'\subseteq D$ and $E$ is a repair of $D$, then we denote by $E_{|D'}$ the subset $E'$ of $E$ with $\cells(E')=\cells(D')$. 

A \e{prefix} of $D$ is a database that comprises a prefix of $D$'s cells for each attribute; that is, it is a database  $D'\subseteq D$ such that if $c\in\cells(D'[A_j])$ for some attribute $A_j$, then for all $c'\in\cells(D[A_j])$ with $D(c')<D(c)$, it holds that $c'\in\cells(D'[A_j])$.
%$\cells(D')=\set{c\in\cells(D)\mid D(c)\leq v_{\lambda(c)}}$ for some numbers $v_{A_1},\dots,v_{A_q}$. 
We write $D'\subseteqp D$ to denote that $D'$ is a prefix of $D$. 
%If $D'\subseteqp D$, then $D''=D\setminus D'$ is a \e{suffix} of $D$, which we denote by $D''\subseteqp D$. 
We say that $D'$ is a \e{strict} prefix 
%(resp.~suffix) 
of $D$ if $D'\subseteqp D$ and $\cells(D')\subsetneq \cells(D)$. If $D'$ is a prefix of $D$, then the complement $D\setminus D'$ is a \e{suffix} of $D$.

We say that $D$ is \e{contracted} if $D$ consists of a single point, that is, $D(c)=D(c')$ for all $c$ and $c'$ in $\cells(D)$. For a set $C\subseteq\cells(D)$, we say that $C$ \e{satisfies} $\Gamma$ (denoted $C\models\Gamma$) if $D\models\Gamma$ for a contracted database $D$ with $\cells(D)=C$. (Note that either all such contracted $D$ or none of them satisfy $\Gamma$, since the common cell value has no impact on satisfying $\Gamma$.)

The following lemma implies that, without loss of generality, we can assume that an optimal repair consists of two parts, as illustrated in \Cref{fig:bounded-1-anatomy}: one is an optimal repair of a strict prefix, and the other is a contraction of the suffix. 

\begin{figure}
  \centering
  \hskip1em
  \input{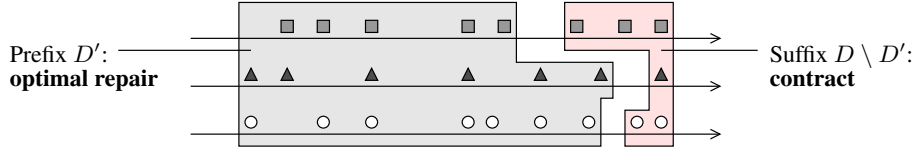}
  \caption{A database $D$ over the line metric. Each shape corresponds to a cell of one of three labels (circle, triangle, square). The structure of an optimal repair by \Cref{lemma:line-suffix-contracted} comprises of two: an optimal repair for a strict prefix, and a contracted suffix.}
  \label{fig:bounded-1-anatomy}
  \end{figure}

\begin{restatable}{lemma}{linesuffixcontracted}
    \label{lemma:line-suffix-contracted}
Let $D$ be an inconsistent database. If any $(\delta\leq\tau)$-repair exists, then there is an optimal $(\delta\leq\tau)$-repair $E$ and a strict prefix $D'$ of $D$ with the following properties.
\begin{enumerate}
    \item $E_{|D'}$ is an optimal $(\delta\leq\tau)$-repair of $D'$.
    \item $E_{|D\setminus D'}$ is contracted.
\end{enumerate}
\end{restatable}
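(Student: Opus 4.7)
The plan is to start with any optimal $(\delta\leq\tau)$-repair $E^*$ of $D$, rewrite it in a canonical \emph{sorted} form $E$ by a line exchange argument, and then read off the strict prefix $D'$ from the rightmost cluster of $E$; both conclusions will follow from the construction together with one further exchange argument.

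First, I would reduce to the case where $E^*$ is sorted within each attribute: for every $A_j$ and every $c,c'\in\cells(D[A_j])$ with $D(c)<D(c')$ we have $E(c)\leq E(c')$. The reduction is the classical line exchange: if $E^*(c)>E^*(c')$ while $D(c)<D(c')$ for same-label $c,c'$, swap the $E^*$-values of $c$ and $c'$. Such a swap (i)~leaves every coincidence profile unchanged, hence preserves $\Gamma$-satisfaction; (ii)~does not increase cost by the rearrangement inequality $|a_1-b_1|+|a_2-b_2|\leq|a_1-b_2|+|a_2-b_1|$ for $a_1\leq a_2$, $b_1\leq b_2$; and (iii)~preserves the per-cell $(\delta\leq\tau)$-bound, which follows from $|a_i-b_i|\leq\max(|a_1-b_2|,|a_2-b_1|)$ by a brief case analysis, using that the two cells share the weight $w(A_j)$. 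Iterating yields a sorted optimal repair $E$.

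Next, set $v^*\defeq\max_{c\in\cells(D)}E(c)$ and $D\setminus D'\defeq E^{-1}(v^*)$. Sortedness immediately forces $D\setminus D'$ to be a suffix: if $c\in E^{-1}(v^*)$ has label $A_j$ and $c'\in\cells(D[A_j])$ has $D(c')>D(c)$, then $E(c')\geq v^*$ by sortedness, so $E(c')=v^*$. In particular $E_{|D\setminus D'}$ is contracted at $v^*$, and $D'$ is a strict prefix because $D\setminus D'\neq\emptyset$. This establishes conclusion~(2).

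For conclusion~(1), I would argue by contradiction: if some $(\delta\leq\tau)$-repair $F$ of $D'$ satisfied $\kappa(D',F,\delta)<\kappa(D',E_{|D'},\delta)$, then gluing $F$ with the $v^*$-contraction of $D\setminus D'$ would produce a database $E''$ with $\kappa(D,E'',\delta)<\kappa(D,E,\delta)$, contradicting the optimality of $E$. The main obstacle will be verifying that $E''$ is actually a $(\delta\leq\tau)$-repair of $D$: the combined coincidence profile at $v^*$ is the componentwise sum $p_F(v^*)+p_{E_{|D\setminus D'}}(v^*)$, and coincidence constraints are not in general closed under addition. I would dispatch this obstacle by a preliminary normalization of $F$ via the same exchange as in Step~1 applied jointly to $F$ and $E_{|D\setminus D'}$: since every cell in $\cells(D'[A_j])$ lies strictly before every cell in $\cells((D\setminus D')[A_j])$ in the $D$-value order, any $F$-cell placed at $v^*$ can be swapped leftwards against a cell assigned below $v^*$ without increasing cost or breaking the bound, until $F$ places no cell at $v^*$. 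Then $p_F(v^*)=(0,\dots,0)$, the combined profile at $v^*$ equals $p_E(v^*)\in\Gamma(v^*)$, and the argument closes.
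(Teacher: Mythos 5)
Your Steps 1 and 2 are sound and are essentially the paper's exchange argument in a somewhat cleaner form: sorting each attribute by same-label swaps preserves every coincidence profile and the per-cell bound (your max-inequality is valid because the two cells share the weight $w(A_j)$) and weakly decreases cost, so a sorted optimal repair exists, and reading off the suffix $E^{-1}(v^*)$ yields the strict prefix and conclusion~(2). (The paper instead derives the prefix structure from a strictly-decreasing swap, which is delicate in the equality case; your weak-inequality version sidesteps that, and ties in $D$-values are harmless for the prefix definition.)

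The genuine gap is in your argument for conclusion~(1). You correctly identify the crux --- gluing a cheaper repair $F$ of $D'$ onto the contracted suffix may violate $\Gamma(v^*)$ because profiles add and $\Gamma$ need not be closed under addition (a subtlety the paper's own write-up passes over) --- but the proposed normalization does not dispatch it. Swapping the values of two cells of the same attribute never changes any coincidence profile $p_F(u)$; it only permutes which cell carries which value. Hence no sequence of such swaps can turn $p_F(v^*)$ into $(0,\dots,0)$. Swapping an $F$-cell at $v^*$ against a suffix cell is vacuous, since the suffix cells sit at $v^*$ as well, and cross-attribute swaps are unavailable (they alter profiles uncontrollably and the weights differ); moreover there may be no same-attribute partner ``assigned below $v^*$'' at all, e.g.\ if $F$ sends every $A_j$-cell of $D'$ to $v^*$. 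So the key claim --- that a cheapest $(\delta\leq\tau)$-repair of $D'$ can be assumed to place no cell at $v^*$ --- is left unproved, and it is precisely the point at issue for a possibly non-uniform $\Gamma$. The clean way to close the argument (and what the dynamic program actually relies on) is to prove the value-restricted statement: $E_{|D'}$ is optimal among $(\delta\leq\tau)$-repairs of $D'$ that use only values smaller than $v^*$. For that restricted class your gluing works verbatim: at $v^*$ the combined profile is exactly the suffix's, which equals $p_E(v^*)\in\Gamma(v^*)$; every other value carries $F$'s own profile, which is allowed because $F\models\Gamma$; the bound is inherited; so a cheaper restricted $F$ would contradict the optimality of $E$. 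If you insist on the lemma literally as stated (unrestricted optimality of $E_{|D'}$), you need a genuinely new argument at this point; the swap normalization cannot supply it.
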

\begin{proof}[Proof (Sketch)]
The proof (in \Cref{app:bounded}) is based on showing that two $A$-cells cannot cross each other in their movement from $D$ to $E$. 
\end{proof}

Note that $D$ has a polynomial number of prefixes. Also
note that $\subseteqp$ is a partial order over the prefixes of $D$. Hence, due to \Cref{lemma:line-suffix-contracted}, we can apply dynamic programming to compute an optimal $(\delta\leq\tau)$-repair of every prefix $D'$ of $D$, where we traverse the prefixes in a topological order according to  $\subseteqp$, starting with the empty set.

We compute as follows the cost $\kappa(D',E',\delta)$ of an optimal $(\delta\leq\tau)$-repair $E'$ of $D'$. (By recording the decisions throughout the procedure, we can derive the repair itself.)
Let $v_1,\dots,v_n$ be the values of $M$, with $v_i<v_j$ for all $i<j$.  
For a prefix $D'$ of $D$, we denote by $V^{D'}_r$ the cost of a $(\delta\leq\tau)$-repair $E'$ of $D'$ such that $E'(c)\in\{v_1,\dots,v_r\}$ for all $c\in\cells(D)$, and $\kappa(D',E',\delta)$ is minimal among all $(\delta\leq\tau)$-repairs of $D'$ satisfying this property.
Our final goal is to compute the value $V^D_n$.
 We show how to compute $V^{D'}_r$ using dynamic programming. 
     \begin{enumerate}
            \item If $\cells(D')=\emptyset$, then $V^{D'}_r=0$. 
            \item  If $\cells(D')\neq\emptyset$ and $r<1$, then $V^{D'}_r=\infty$.
            \item Otherwise, let $\mathcal{P}$ be the set of all prefixes $D''$ of $D'$ such that $\cells(D'\setminus D'')\models\Gamma$ and 
            $\delta(D'(c),v_r)\le w(\lambda(c))\cdot\tau$ for all $c\in\cells(D'\setminus D'')$. Then: $$V^{D'}_r=\min_{D''\in\mathcal{P}}\Big( V^{D''}_{r-1}+\sum_{c\in\cells(D'\setminus D'') }w(\lambda(c))\cdot\delta(D'(c),v_r)\Big).\,$$
       \end{enumerate}
   The first case refers to the situation where $D'$ is empty; hence, the cost is zero. The second case refers to the situation where we no longer have available positions for the remaining cells, and the cost is infinite (hence, there is no $(\delta\leq\tau)$-repair). In the third case, we go over all possible prefixes $D''$ of $D'$ with $\cells(D'\setminus D'')\models\Gamma$. In this case, the repair $E'$ is such that $E'(c)\in\{v_1,\dots,v_{r-1}\}$ for all $c\in \cells(D'')$, while $E'(c')=v_r$ for every cell $c'\in\cells(D'\setminus D'')$. In this case, $V^{D''}_{r-1}$ is the minimal cost for $D''$, and to that we add the cost of changing the  cells of $D'\setminus D''$ to $v_r$. Since $E'_{|D'\setminus D''}$ is contracted, by checking that $\cells(D'\setminus D'')\models\Gamma$, we ensure that we do not violate consistency by placing all cells of $D'\setminus D''$ together. Then, $V^{D''}_{r-1}$ is responsible for checking that $E'_{|D''}\models\Gamma$. This guarantees that we obtain a repair. Furthermore, we only consider the prefixes $D''$ such that $\delta(D'(c),v_r)\le w(\lambda(c))\cdot\tau$ for all $c\in\cells(D'\setminus D'')$, and $V^{D''}_{r-1}$ is responsible for checking that $\delta(D'(c),E'(c))\le w(\lambda(c))\cdot\tau$ for all $c\in\cells(D'')$, which guarantees that we obtain a $(\delta\leq\tau)$-repair.
    
      The correctness of the algorithm is a direct consequence of Lemma~\ref{lemma:line-suffix-contracted}. Regarding the computational complexity, we can compute each value $V^{D'}_r$ in polynomial time, since, as aforementioned, we can compute the set of all prefixes of $D'$ in polynomial time. Moreover, it is rather straightforward that we can compute the summation in polynomial time, and also verify the conditions that $\cells(D'\setminus D'')\models\Gamma$ and $\delta(D'(c),v_r)\le w(\lambda(c))\cdot\tau$ for all $c\in\cells(D'\setminus D'')$. Finally,
    there are polynomially many values $V^{D'}_r$ that we need to compute. Therefore, we conclude that the algorithm runs in polynomial time in $|D|$.

\subparagraph*{The full line.} 
When $M=\mathbb{R}$, we can use the following lemma, which gives a reduction to the finite case that we discussed in the previous part.

\begin{restatable}{lemma}{lineinfinitetofinite}\label{lemma:line-infinite-to-finite}
Suppose that $M=\mathbb{R}$ and $\Gamma$ is closed under addition. Let $D$ be an inconsistent database. If any $(\delta\leq\tau)$-repair exists, then there is an optimal $(\delta\leq\tau)$-repair $E$ such that
$\vals(E)\subseteq\left(\vals(D)\cup\set{(v\pm (w(A)\cdot\tau)\mid v\in\vals(D),A\in\atts(D)}\right)$.
\end{restatable}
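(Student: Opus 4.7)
The plan is to start with any optimal $(\delta\leq\tau)$-repair $E$ (which exists by hypothesis) and to perturb it, one value at a time, into an optimal repair $E^\star$ satisfying $\vals(E^\star)\subseteq\bar V$, where
\[
\bar V \defeq \vals(D)\cup\set{v\pm (w(A)\cdot\tau)\mid v\in\vals(D),\, A\in\atts(D)}.
\]
Throughout, each perturbation must preserve (i) the total cost $\kappa(D,\cdot,\delta)$, (ii) the per-cell bound $|D(c)-E(c)|\leq w(\lambda(c))\tau$, and (iii) satisfaction of $\Gamma$.

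Fix a value $v\in\vals(E)\setminus\bar V$, and let $C_v\defeq\set{c\in\cells(D)\mid E(c)=v}$ be the cluster of cells that $E$ places at $v$. I would move all cells of $C_v$ together to a common new value $v'$, leaving the rest of $E$ untouched. The feasible set for $v'$ is the closed interval $I_{C_v}\defeq\bigcap_{c\in C_v}[D(c)-w(\lambda(c))\tau,\,D(c)+w(\lambda(c))\tau]$, which is nonempty since $v\in I_{C_v}$. The cost contribution of the cluster becomes the convex piecewise-linear function $f(v')\defeq\sum_{c\in C_v}w(\lambda(c))\,|D(c)-v'|$, while all other per-cell costs are untouched.

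The key observation is that $f$, restricted to $I_{C_v}$, always attains its minimum at some $v'\in\bar V$. Indeed, the unconstrained minimizers of $f$ over $\reals$ form a (possibly degenerate) closed interval whose endpoints are weighted medians of $\set{D(c)\mid c\in C_v}\subseteq\vals(D)\subseteq\bar V$; intersecting with $I_{C_v}$ either retains such an endpoint or replaces it with a boundary of $I_{C_v}$, which has the form $D(c)\pm w(\lambda(c))\tau$ and thus lies in $\bar V$ as well. Since $E$ is optimal, choosing $v'$ to be any such minimizer forces $f(v')=f(v)$, so the total cost is unchanged.

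It remains to verify that the updated database $E'$ (identical to $E$ except that every $c\in C_v$ now maps to $v'$) is still a $(\delta\leq\tau)$-repair. The per-cell bound holds by construction since $v'\in I_{C_v}$. The profile of $v$ in $E'$ becomes $(0,\dots,0)$, which lies in $\Gamma$ by our standing assumption. At $v'$, the new profile either equals that of $v$ in $E$ (if $v'\notin\vals(E)$) or equals the coordinate-wise sum of the profiles of $v$ and $v'$ in $E$, which lies in $\Gamma$ by closure under addition; uniformity of $\Gamma$ ensures these two cases are exhaustive. Since $v\notin\bar V$ and $v'\in\bar V$, each iteration strictly decreases $|\vals(E)\setminus\bar V|$, so the process terminates after finitely many steps and yields the desired $E^\star$. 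The main point to get right is the merging case, which is precisely where closure under addition is essential.
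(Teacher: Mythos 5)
Your proposal is correct, and it follows the same overall strategy as the paper: pick a value $v$ of the repair outside the allowed finite set, relocate its entire cluster of cells to a single new value, use closure under addition (plus the empty profile) to keep $\Gamma$ satisfied, keep the per-cell $\tau$-bounds, and iterate until all values lie in $\vals(D)\cup\set{v\pm w(A)\tau}$. Where you differ is in how the new location is chosen and why the cost does not increase: the paper first rules out $v$ lying outside the span of the cluster's original positions, then does an explicit case analysis comparing the total weight of cells on each side of $v$ and slides the cluster toward the heavier side until the first bound constraint of the opposite side becomes tight (landing at a point of the form $D(c_i)\mp w(\lambda(c_i))\tau$). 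You instead observe that the cluster's cost is a convex piecewise-linear function of the common target, whose constrained minimum over the feasibility interval $\bigcap_c [D(c)-w(\lambda(c))\tau,\,D(c)+w(\lambda(c))\tau]$ is attained either at a weighted-median breakpoint (a value of $\vals(D)$) or at an interval endpoint (of the form $D(c)\pm w(\lambda(c))\tau$), and then invoke optimality of $E$ to get cost preservation. This packaging is cleaner: it subsumes the paper's preliminary "outside the span" cases and its two weight-comparison cases in one argument, and it automatically guarantees the per-cell bounds for all cells of the cluster (a point the paper has to argue separately). The only nit is the degenerate situation where cells of the cluster have weight zero; then the unconstrained minimizer set need not have breakpoint endpoints, but a zero-weight cell forces the feasibility interval to be the single point $D(c)\in\vals(D)$, so $v$ could not have been outside the allowed set in the first place, and your conclusion stands.
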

\begin{proof}[Proof (Sketch)]
We show (in \Cref{app:bounded}) that if the repair $E$ uses a point  outside the stated domain, then the entire set of cells in that point can be moved to a point in the domain without increasing the cost; this move is  possible since $\Gamma$ is closed under addition.
\end{proof}

\section{Conclusions}
We studied the problem of finding an optimal repair of an inconsistent database (i.e., a set of labeled cells) with respect to a coincidence constraint. We established that incorporating the metric space underlying the domain of values can lead to algorithms with efficiency and quality guarantees. In summary: the problem is APX-hard for general metrics but logarithmically approximable in polynomial time, and moreover the problem is solvable optimally in polynomial time for a tree metric (hence, for the common line and discrete metrics). 
% Moreover, the algorithm for the tree metrics yields a polynomial-time randomized algorithm for a general metric, guaranteeing a logarithmic-factor approximation. 
We also discussed the case of an infinite metric and the addition of bound constraints. The addition of the bound constraints makes it NP-hard to test whether any legal repair exists, but an optimal repair can be found in polynomial time for the line metric. 

Many directions are left for future work. First, for general metrics our lower bound is APX-hardness (i.e., some constant ratio) while the upper bound is logarithmic; how can this gap be closed? 
Next, as mentioned in \Cref{sec:extensions}, we have left open the case of a tree metric with bound restrictions. Note, however, that even if this case can be solved optimally in polynomial time, it is not at all clear that this tractability has implications on other metrics (e.g., via embedding) as it has in the absence of bound constraints. Still, it is an important challenge to find natural metrics, beyond the line metric, where nontrivial upper bounds can be established. 
% (We note that the discrete metric is trivial for this problem.)
%\david{maybe write this up somewhere formally?}

Another major direction for future work is to extend our work to constraints besides coincidence constraints, including others commonly studied for data quality management: functional dependencies (and their conditional enrichment~\cite{DBLP:conf/icde/BohannonFGJK07}), denial constraints, non-unary inclusion constraints, and so on. Another direction is the extension to coincidence constraints with a non-fixed (given) set of labels (i.e., the ``combined complexity'' variant of this problem), which requires a formalism for compactly expressing coincidence constraints.  

Finally, it is important to investigate the practical aspects of our work. How do the algorithms of this paper perform on common datasets? How practical is the dynamic program for the tree metric? How can we optimize it? What is the actual approximation ratio that takes place in a general metric? How does it change from one metric to another? These questions call for a careful implementation and experimental investigation, as next steps.
% \david{is it customary to ``claim'' a follow-up question?}

%\david{some natural follow-up questions: DANGER: some reviewers may read these as weakness of this paper}

%\david{To mention: not much motivation to study the tree metric for bound restriction, since (1) tree embedding won't guarantee worst-case per-node distance bounds, and (2) one cannot hope for any finite approximation in polynomial time, as this would solve the decision problem. Still, one could ask for algorithms for other metrics of interest. The discrete metric is trivial for this problem (exercise for the reader)} 

%\david{Another question: what other kinds of constraints (that are succinctly representable) are solvable or at least efficiently approximable in metrics of interest?}

%\david{fine-grained complexity: what if we consider the schema size to be not constant?}

%\david{references format seems broken? (when using alphabetic, acronyms overlap references); happy to keep alphabetic, which is easier for reviewers, but should figure out how to avoid this overlap bug, and if it's avoidable}

% \newpage
%bibliographystyle{alpha}
\bibliographystyle{plainurl}
\bibliography{refs}

% \newpage
\appendix
\section{Deferred Proof of \Cref{sec:hardness} (APX-Hardness for General Metrics)}\label{app:hardness}

\apxhard*

\begin{proof}
We will show a PTAS reduction from the problem of finding a minimum cover by 3-sets, which we denote here shortly as 3SC. The input to this problem consists of a set $X$ of $m$ elements, and a collection $S$ of $n$ subsets of $X$, each of size $3$. The goal is to find a minimal subset $S'$ of $S$ that covers $X$, that is, $\cup S'=X$. This problem is known to be APX-hard~\cite{DBLP:journals/ipl/Kann91}.

Given the input $(X,S)$ for 3SC, we construct an instance of our problem as follows. The metric $(M,\delta)$ is defined as $M\defeq X\cup S\cup\set{r}$ where $r$ is a new point, and $\delta$ is defined by the undirected graph that has the following edges, each of unit distance, as illustrated in the left part of \Cref{fig:3set-apx}: there is an edge between $r$ and every $s\in S$, and between $x\in X$ and $s\in S$ whenever $x\in s$. As stated in the theorem, the coincidence constraint is $\Gamma_{A_1\sqsubseteq A_2}$. The database contains an $A_1$-cell $c_x$ for every $x\in X$, and an $A_2$-cell $c_s$ for every 
$s\in S$. 
%$n$ $A_2$-cells $d_1,\dots,d_n$, where $n=|S|$. 
The positions are $D(c_x)=x$ for all $x\in X$, and $D(d_s)=r$ for $s\in S$. Finally, $w(A_j)=1$ for $j=1,\dots,q$.
This concludes the construction.

First, we show how we translate a repair $E$ into a cover $S'$. The simple case is where every $A_1$-cell $c_x$ is in a point $s$ such that $x\in s$. If $E$ has this property, then we say that $E$ is a \e{cover repair}. In this case, we know that each metric point $s$ that includes an $A_1$-cell also includes at least one $A_2$-cell (due to the constraint $A_1\sqsubseteq A_2$), and so, we take as the cover $S'$ the set of all sets $s\in S$ that include one or more $A_2$-cells. 

Note, however, that a repair is not necessarily a cover repair. For illustration, \Cref{fig:3set-apx} shows examples of a cover repair and a non-cover repair.

\begin{figure}[h]
  \centering
  \input{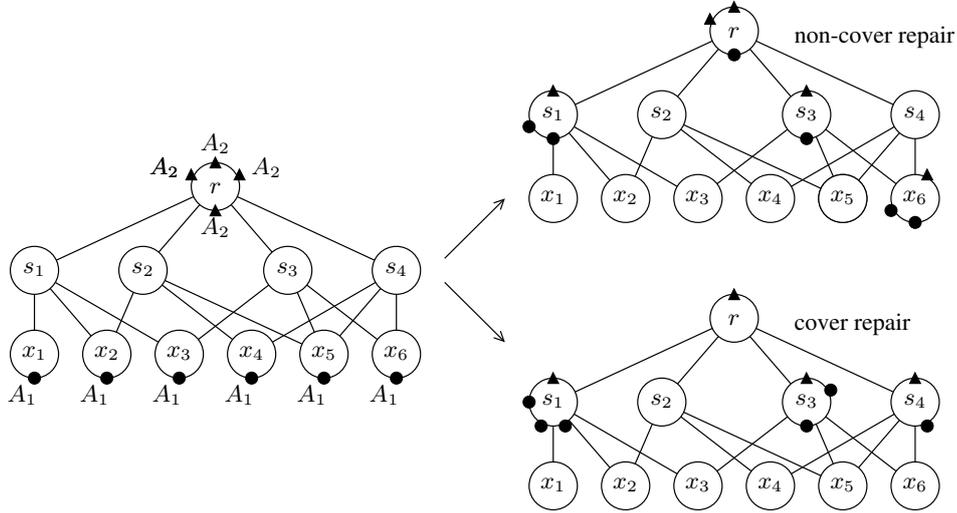}
  \caption{\label{fig:3set-apx}
  Reduction in the proof of  \Cref{thm:apx-hard}. A big circle represents a point (graph vertex) in the metric $M$. A small (filled) circle represents an $A_1$-cell of the database $D$, and a triangle represents an $A_2$-cell. The construction is for the 3SC instance with $X=\set{x_1,\dots,x_6}$ and $S$ consisting of the sets $s_1=\set{x_1,x_2,x_3}$, $s_2=\set{x_2,x_4,x_5}$,  
  $s_3=\set{x_3,x_5,x_6}$ and $s_4=\set{x_4,x_5,x_6}$. Left: constructed instance. Top right: non-cover repair. Bottom right: cover repair (assuming each circle comes from a neighboring vertex).}
\end{figure}

Given a repair $E$, we construct a cover repair $E'$ as follows. In the first step, we handle every point $x\in X$ that includes cells. We reposition each cell $c_{x'}$ where $x'\neq x$, in an arbitrary point $s$ such that $x\in s$. We are then left with, possibly, the $A_1$-cell $c_x$ and a nonempty set of $A_2$-cells. So, in the second step, we move all of these to an arbitrary point $s$ such that $x\in s$. Note that, at this time, there may be points $s\in S$ that violate $A_1\sqsubseteq A_2$ since they may include $A_1$-cells that we moved there, but no $A_2$-cells. So, in the third step, we reposition all of the $A_2$-cells: we place one $A_2$-cell in every nonempty $s$ point and leave all of the rest in the root $r$.

We need to prove that we have not increased the cost of the repair, that is, $\kappa(D,E',\delta)\leq\kappa(D,E,\delta)$. To see that, observe the following. In the first step, we gain at least one unit for every cell $c_{x'}$ that we are moving, since the distance between $x'$ and $x$ is at least two whereas we position $c_{x'}$ in a point of distance one from $x'$. In the second step, we pay one unit for repositioning $x$, but gain at least one for the repositioning of the $A_2$-cells (since their distance from $r$ is reduced from two to one). In the third step, we pay one unit for every cell $s$ that is empty in $E$ and nonempty in $E'$; however, for each such cell $s$ we gain one unit from every cell that has been moved to $s$ in the first step. All in all, we gain at least as much as we pay.

From the construction so far we conclude that the 3SC instance has a solution of cost $p$ if and only if there is a repair of cost $m+p$. To complete the proof, we need to show that, given $\beta>1$, there is $\alpha>1$ so that an $\alpha$-optimal repair entails a $\beta$-optimal cover. Let $p^\star$ be the cost of a minimal set cover. If we have an $\alpha$-optimal repair $E$ of cost $m+p$, then we have
$m+p\leq\alpha\cdot (m+p^\star)$, and so, $p\leq (\alpha-1)\cdot m+\alpha\cdot p^\star$. Note, however, that $m$ is at most of size $3p^\star$, and so
$p\leq (\alpha-1)\cdot 3p^\star+\alpha\cdot p^\star
= (4\alpha-3)\cdot p^\star
$.
Hence, to obtain a $\beta$ approximation, we need to take $\alpha=(\beta+3)/4$.
\end{proof}

\section{Deferred Proof of \Cref{sec:tree-to-general} (Repair via Tree Embedding)}
\label{app:proof-of-expectation}

\expectedlog*
\begin{proof}
Recall from \Cref{lemma:tree-embedding} that for every random tree $T$, each distance in $\delta_T$ is mapped to a smaller or equal distance in $\delta$; hence, $\kappa(D,E_T,\delta)\leq\kappa(D,E_T,\delta_T)$. Also observe that $\Eopt$ is a repair, and so,
$\kappa(D,E_T,\delta)\leq\kappa(D,\Eopt,\delta_T)$ due to the optimality of $E_T$. Hence, we conclude the following:
\begin{align*}
\mathbb{E}_{T\sim \mathcal{P}}[\kappa(D,E_T,\delta)] & 
\leq \mathbb{E}_{T\sim \mathcal{P}}[\kappa(D,E_T,\delta_T)] \\
& \leq \mathbb{E}_{T\sim \mathcal{P}}[\kappa(D,\Eopt,\delta_T)]\\
& = \mathbb{E}_{T\sim \mathcal{P}}\Big[\sum_{c\in\cells(D)}w(\lambda(c))\cdot\delta_T(D(c),\Eopt(c))\Big]\\
&=
\sum_{c\in\cells(D)}w(\lambda(c))\cdot\mathbb{E}_{T\sim \mathcal{P}}\Big[\delta_T(D(c),\Eopt(c))\Big]\\
& \leq \sum_{c\in\cells(D)}w(\lambda(c))\cdot
O(\log |M|)\cdot \delta(D(c),\Eopt(c))\\&=
O(\log |M|)\cdot \sum_{c\in\cells(D)}w(\lambda(c))\cdot\delta(D(c),\Eopt(c))\\
&
= O(\log |M|)\cdot \costsum(D,\Eopt,\delta),
\end{align*}
where the equalities follow by definitions and linearity of expectation, while the last inequality is again due to~\Cref{lemma:tree-embedding}. 
\end{proof}
%\begin{theorem}
%    It is \textsc{NP}-hard to compute an $\alpha$-optimal repair on $\costsum$ for a family of inconsistent metric databases, for some universal constant $\alpha>1$.
%\end{theorem}
%\begin{remark}
%    Without optimizing the above, I can show $\alpha>1.8466$.
%  \end{remark}

\section{Deferred Proof of \Cref{sec:infinite} (Infinite Metric)}
\label{app:infinite}

\noninventive*
\begin{proof}
Let $E_0$ be a repair. We will construct a repair $E$ such that
$\vals(E)\subseteq\vals(D)$ and $\kappa(D,E,\delta)\leq2\cdot\kappa(D,E_0,\delta)$. To construct $E$, we update every cell with a value $v$ outside of $\vals(D)$.
For that, we apply the following change to $E_0$, for every $v\in\vals(E_0)\setminus\vals(D)$, until no such $v$ exists. 
Let $c_m\in E_0^{-1}(v)$ be a cell such that $\delta(D(c_m),v)$ is minimal among the cells $c\in E_0^{-1}(v)$. Hence, $c_m$ is the cell that has the least cost of update among those of $E_0^{-1}(v)$. We replace in $E_0$ the value of every cell in $E_0^{-1}(v)$ with the value $D(c_m)$. 

Observe that $E$ is consistent, since $E$ is obtained by placing, in each metric point, either no cells or the disjoint union of cell collections from multiple points of $E_0$. The former is allowed by the assumption that $(0,\dots,0)\in\Gamma$, and the latter is allowed by the assumption that $\Gamma$ is closed under addition. 

Note that for every $c\in E_0^{-1}(v)$ it holds that
$$\delta(D(c),E(c))
\leq \delta(D(c),E_0(c))+ \delta(E_0(c),E(c))
\leq 2\cdot \delta(D(c),E_0(c))\,.
$$
The first inequality is due to the triangle inequality. The second inequality is due to the construction above where  $\delta(E_0(c),E(c))$ is the same as $\delta(v,D(c_m))$, and $\delta(v,D(c_m))=\delta(D(c_m),v)$ is at most $\delta(D(c),v)=\delta(D(c),E_0(c))$ due to symmetry.  We conclude that 
$\delta(D(c),E(c))$ is at most $2\cdot \delta(D(c),E_0(c))$ and then the proposition immediately follows.
\end{proof}

\section{Deferred Proof of \Cref{sec:bound} (Bound Restriction)}
\label{app:bounded}

\boundedhard*

\begin{proof}
Membership in NP is straightforward. We will show reductions for the two cases, where we represent each $(M,\delta)$ as a graph. See \Cref{fig:bounded-reductions} for an illustration of the reductions.

\begin{figure}
  \def\xt{x^\trueT}
   \def\xf{x^\falseF}
  \input{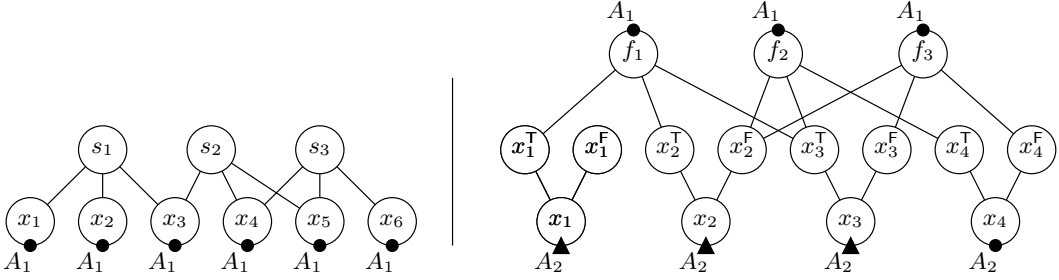}
  \caption{Reductions in the proof of  \Cref{prop:bounded-hard}. A big circle represents a point (graph vertex) in the metric $M$, a small (filled) circle represents an $A_1$-cell, and a triangle represents an $A_2$-cell. The left graph (Case~1) is constructed for the X3C instance with $X=\set{x_1,\dots,x_6}$, $s_1=\set{x_1,x_2,x_3}$, $s_2=\set{x_3,x_4,x_5}$,   and $s_3=\set{x_4,x_5,x_6}$. The right graph (Case~2) is constructed for the CNF formula
  $\varphi=f_1\land f_2\land f_3$ where $f_1=x_1\lor x_2\lor x_3$, $f_2=\neg x_2\lor x_3\lor x_4$, and $f_3=\neg x_2\lor \neg x_3\lor\neg x_4$. 
  \label{fig:bounded-reductions}}
 \end{figure}

\subparagraph*{Case 1.} We devise a reduction from the problem of \e{exact cover by 3-sets} (X3C). The input to this problem consists of a set $X$ of $3m$ elements, and a collection $S$ of subsets of $X$ of size $3$. The goal is to determine whether there is a cover of $X$ that consists precisely $m$ sets from $S$ (hence, the sets are necessarily pairwise disjoint).  

Given the input $(X,S)$ of X3C, we construct an instance of our problem as follows. The metric $(M,\delta)$ is defined as $M\defeq X\cup S$ and $\delta$ is defined by the undirected graph that has an edge (of a unit weight) between $x$ and $s$ whenever $x\in s$. The coincidence constraint $\Gamma$ states that there are either three or zero cells in each point; that is, $\Gamma=\set{(0),(3)}$. 
The database $D$ consists of a single cell for each $x\in X$. Finally, $w(A_1)=1$ and $\tau=1$ (i.e., every cell can pass one edge).

To complete the proof, we need to show that a $(\delta\leq\tau)$-repair exists if and only if an exact cover exists. If $E$ is a repair, then it can have cells only in $S$-points; otherwise, an $X$-point should have three elements, and then two of these should have gone through at least two edges, in contrast to the bound constraint $\tau=1$. Hence, each nonempty $S$-point $s$ contains precisely three cells, and these cells represent the elements of $s$. We conclude that the structure of $E$ is an exact cover. On the other hand, we can similarly construct a $(\delta\leq\tau)$-repair from every exact cover of $(X,S)$.

\subparagraph*{Case 2.} We devise a reduction from SAT. We are given a CNF formula $\varphi=f_1\land\dots\land f_m$ over a set
$\set{x_1,\dots,x_n}$ of variables, where each $f_i$ is a disjunction of atomic formulas of the form $x_i$ or $\neg x_i$. The goal is to determine whether $\varphi$ has a satisfying assignment. 

Given $\varphi$, we construct an instance of our problem as follows. The set $M$ is the union of three sets:
$$M\defeq \set{f_1,\dots,f_m}\cup\set{x_1,\dots,x_n}\cup \set{x_i^b \mid i\in\set{1,\dots,n}\land b\in\set{\trueT,\falseF}}$$
The distance $\delta$ is defined using the graph that connects each $x_i$ to $x_i^{\trueT}$ and
$x_i^{\falseF}$, each $f_j$ to $x_i^{\trueT}$ whenever $x_i$ occurs positively in $f_j$, and each $f_j$ to $x_i^{\falseF}$ whenever $x_i$ occurs negatively in $f_j$. All weights are unit weights. The coincidence constraint is $\Gamma_{A_1\sqsubseteq A_2}$, as required, and $w(A_1)=w(A_2)=\tau=1$. The database $D$ is defined as follows. We have a unique $A_1$-cell in each $f_j$ position, and a unique $A_2$-cell in each $x_i$ cell.

To complete the proof, we need to show that a $(\delta\leq\tau)$-repair exists if and only if a satisfying assignment exists. Let $E$ be a $(\delta\leq\tau)$-repair. Note that each $A_1$-cell corresponds to a conjunct $f_j$, and it must be positioned in a position $x_i^b$ that contains an $A_2$-cell, which must be $x_i$. Hence, $x_i$ is moved to one of the $x_i^b$, and specifically one that satisfies $f_j$. We conclude that the $A_2$-cells in the $x_i^b$ positions encode a satisfying assignment. The reverse direction, where we construct a $(\delta\leq\tau)$-repair from a satisfying assignment, is constructed similarly to the first direction.
\end{proof}

\linesuffixcontracted*
\begin{proof}
Let $E$ be an optimal $(\delta\leq\tau)$-repair of $D$. Let $v$ be the largest value of $M$ such that $E(c)=v$ for some $c\in\cells(D)$. Let $D'$ be the subset of $D$ that contains all the cells $c\in\cells(D)$ with $E(c)\neq v$. We first show that $D'$ is a strict prefix of $D$. Assume, by way of contrdiction, that $D'$ is not a strict prefix of $D$. We have that $\cells(D')\subsetneq \cells(D)$ by definition; hence, the only possible case is that there exists an attribute $A_j$ and two cells $c,c'\in\cells(D[A_j])$ such that $D(c)<D(c')$ while $c'\in \cells(D')$ and $c\not\in\cells(D')$. This implies that $E(c)=v$ (as we only exclude from $D'$ cells that satisfy this property), while $E(c')<v$ (as $v$ is the largeast value of $M$ occuring in $E$). We conclude that $D(c)<D(c')$ while $E(c)>E(c')$. We can now construct a repair $E'$ of $D$ by swapping between these two values; that is, defining $E'(c)=E(c')$ and $E'(c')=E(c)$ while keeping the values of the other cells intact. It is rather straightforward that $E\models\Gamma$ implies $E'\models\Gamma$. Moreover, $E'$ is a $(\delta\leq\tau)$-repair and $\kappa(D,E,\delta)>\kappa(D,E',\delta)$, since $\delta(D(c),E(c))>\delta(D(c),E'(c))$ and $\delta(D(c'),E(c'))>\delta(D(c'),E'(c'))$, and $E$ and $E'$ agree on the values of the rest of the cells. This is a contradiction to the fact that $E$ is an optimal $(\delta\leq\tau)$-repair of $D$. We conclude that $D'$ is a strict prefix of $D$, and it is now easy to see that $E_{|D'}$ is an optimal $(\delta\leq\tau)$-repair of $D'$; otherwise, we can construct a repair $E'$ of $D$ such that $E'_{|D'}$ coincides with an optimal $(\delta\leq\tau)$-repair of $D'$ and $E'(c)=v$ for all $c\not\in\cells(D')$, with $\kappa(D,E,\delta)>\kappa(D,E',\delta)$. Finally, $E_{|D\setminus D'}$ is contracted by construction, and that concludes our proof.
\end{proof}

\lineinfinitetofinite*
\begin{proof}
Let $E$ be a $(\delta\leq\tau)$-repair of $D$. Let $c\in\cells(D)$ be a cell such that $E(c)\not\in V$ for $V=\vals(D)\cup\set{(v\pm (w(A)\cdot\tau)\mid v\in\vals(D),A\in\atts(D)}$. If no such cell exists, then $E$ satisfies the desired property and that concludes our proof. Otherwise, let $c_1,\dots,c_n\in\cells(D)$ be all the cells such that $E(c_\ell)=E(c)$ (including the cell $c$ itself). We assume, without loss of generality, that for every $\ell\in\{1,\dots,n-1\}$, it holds that $D(c_\ell)\le D(c_{\ell+1})$.
We show how we can transform $E$ into a $(\delta\leq\tau)$-repair $E'$, such that $E'(c_\ell)\in V$ for every $\ell\in\{1,\dots,n\}$, without increasing the cost. By repeatedly applying this process to each group of cells with a value outside $V$, we can obtain a $(\delta\leq\tau)$-repair $E''$ such that $E''(c)\in V$ for all $c\in\cells(D)$ and $\kappa(D,E'',\delta)\le\kappa(D,E,\delta)$.

It cannot be the case that $E(c)$ is located to the right of $D(c_n)$; otherwise, by defining $E'(c_\ell)=D(c_n)$ for every $\ell\in\{1,\dots,n\}$ (and $E'(c)=E(c)$ for any other cell), we clearly do not introduce any violations of the constraint (since $\Gamma$ is closed under addition), and for every $\ell\in\{1,\dots,n\}$ we have that: \[\delta(D(c_\ell),E'(c_\ell))= \delta(D(c_\ell),E(c_\ell))-\delta(D(c_n),E(c_n)).\]
Hence, $\kappa(D,E',\delta)\le\kappa(D,E,\delta)$, which is a contradiction to the fact that $E$ is an optimal $(\delta\leq\tau)$-repair. Similarly, it cannot be the case that $E(c)$ is located to the left of $D(c_1)$, or we can define $E'(c_\ell)=D(c_1)$ for all $\ell\in\{1,\dots,n\}$, and obtain a repair $E'$ with $\kappa(D,E',\delta)\le\kappa(D,E,\delta)$. 

Therefore, $E(c)$ is located between $D(c_k)$ and $D(c_{k+1})$ for some $k\in\{1,\dots,n-1\}$. We consider two possible cases.
\begin{enumerate}
   % \item If $\sum_{\ell=1}^k w(\lambda(c_\ell))\ge \sum_{\ell=k+1}^n w(\lambda(c_\ell))$ and $\delta(D(c_\ell),D(c_k))\le w(\lambda(c_\ell))\cdot\tau$ for all $\ell\in\set{k+1,\dots,n}$, then we define $E'(c_\ell)=D(c_k)$ for all $\ell\in\set{1,\dots,n}$ and $E'(c)=E(c)$ for all other cells $c$. Since $\delta(D(c_\ell),E'(c_\ell))\le \delta(D(c_\ell),E(c_\ell))$ for all $\ell\in\set{1,\dots,k}$ and $E$ is a $(\delta\leq\tau)$-repair, we also have that $\delta(D(c_\ell),E'(c_\ell))\le w(\lambda(c_\ell))\cdot\tau$ for all $\ell\in\set{1,\dots,k}$; hence, $E'$ is a $(\delta\leq\tau)$-repair of $D$. Let $d=\delta(E(c), D(c_k))$. It holds that: 
   % \[\kappa(D,E',\delta)=\kappa(D,E,\delta)-\sum_{\ell=1}^k w(\lambda(c_\ell))\cdot d+\sum_{\ell=k+1}^n w(\lambda(c_\ell))\cdot d\]
  %  and since $\sum_{\ell=1}^k w(\lambda(c_\ell))\ge \sum_{\ell=k+1}^n w(\lambda(c_\ell))$, we conclude that $\kappa(D,E',\delta)\le\kappa(D,E,\delta)$.
    \item If $\sum_{\ell=1}^k w(\lambda(c_\ell))\ge \sum_{\ell=k+1}^n w(\lambda(c_\ell))$, let $c_i\in\set{c_{k+1},\dots,c_n}$ be the cell with the highest $D(c_i)-w(\lambda(c_i))\cdot\tau$ value among all these cells (if there are several cells with the same value, we choose one arbitrarily). In this case, we define $E'(c_\ell)=D(c_i)-w(\lambda(c_i))\cdot\tau$ for all $\ell\in\set{1,\dots,n}$ and $E'(c)=E(c)$ for all other cells $c$. Note that it must be the case that $D(c_i)-w(\lambda(c_i))\cdot\tau\le E(c)$; otherwise, we have that $\delta(D(c_i),E(c))>w(\lambda(c_i))\cdot\tau$, which contradicts the fact that $E$ is a $(\delta\leq\tau)$-repair. Thus,
    we have that $\delta(D(c_\ell),E'(c_\ell))\le \delta(D(c_\ell),E(c_\ell))$ for all $\ell\in\set{1,\dots,k}$ and since $E$ is a $(\delta\leq\tau)$-repair, we have that $\delta(D(c_\ell),E'(c_\ell))\le w(\lambda(c_\ell))\cdot\tau$ for all $\ell\in\set{1,\dots,k}$. As for the cells in $\set{c_{k+1},\dots,c_n}$, since $D(c_i)-w(\lambda(c_i))\cdot\tau\ge  D(c_\ell)-w(\lambda(c_\ell))\cdot\tau$ for all $\ell\in\set{k+1,\dots,n}$ (due to our choice of $c_i$), it holds that: \[\delta(D(c_\ell),D(c_i)-w(\lambda(c_i))\cdot\tau)\le \delta(D(c_\ell),D(c_\ell)-w(\lambda(c_\ell))\cdot\tau)=w(\lambda(c_\ell))\cdot\tau\] for all $\ell\in\set{k+1,\dots,n}$; hence, $E'$ is a $(\delta\leq\tau)$-repair of $D$. Let $d=\delta(E(c), D(c_i)-w(\lambda(c_i))\cdot\tau)$. It holds that: 
    \[\kappa(D,E',\delta)=\kappa(D,E,\delta)-\sum_{\ell=1}^k w(\lambda(c_\ell))\cdot d+\sum_{\ell=k+1}^n w(\lambda(c_\ell))\cdot d\]
    and since $\sum_{\ell=1}^k w(\lambda(c_\ell))\ge \sum_{\ell=k+1}^n w(\lambda(c_\ell))$, we conclude that $\kappa(D,E',\delta)\le\kappa(D,E,\delta)$.
    \item If $\sum_{\ell=1}^k w(\lambda(c_\ell))< \sum_{\ell=k+1}^n w(\lambda(c_\ell))$, this case is symmetric to case (1).
\end{enumerate}
Note that since $\Gamma$ is closed under addition, we never introduce new violations of the constraints by moving a set of cells from one point to another. In both cases we obtain a $(\delta\leq\tau)$-repair $E'$ of $D$ with $\kappa(D,E',\delta)\le\kappa(D,E,\delta)$, and this concludes our proof. 
\end{proof}

\end{document}